\newtheorem{remark}{Remark}
\newtheorem{definition}{Definition}
\newtheorem{theorem}{Theorem}
\newtheorem{example}{Example}
\title{On Precoding for Constant K-User MIMO Gaussian Interference Channel with Finite Constellation Inputs}
\begin{document}

\author{
\authorblockN{Abhinav Ganesan and B. Sundar Rajan\\}
\IEEEauthorblockA{\small{Email: \{abhig\_88, bsrajan\}@ece.iisc.ernet.in}}
\thanks{\hrule \vspace{0.2cm} Part of the content of this paper appeared in the Proceedings of IEEE ICC 2013 held during June $9$-$13$, $2013$ at Budapest, Hungary.}
}

\maketitle
\thispagestyle{empty}	

\begin{abstract}
This paper considers linear precoding for constant channel-coefficient $K$-User MIMO Gaussian Interference Channel (MIMO GIC) where each transmitter-$i$ (Tx-$i$), requires to send $d_i$ independent complex symbols per channel use that take values from fixed finite constellations with uniform distribution, to receiver-$i$ (Rx-$i$) for $i=1,2,\cdots ,K$. We define the maximum rate achieved by Tx-$i$ using any linear precoder, when the interference channel-coefficients are zero, as the signal to noise ratio (SNR) tends to infinity to be the Constellation Constrained Saturation Capacity (CCSC) for Tx-$i$. We derive a high SNR approximation for the rate achieved by Tx-$i$ when interference is treated as noise and this rate is given by the mutual information between Tx-$i$ and Rx-$i$, denoted as $I[X_i;Y_i]$. A set of necessary and sufficient conditions on the precoders under which $I[X_i;Y_i]$ tends to CCSC for Tx-$i$ is derived. Interestingly, the precoders designed for interference alignment (IA) satisfy these necessary and sufficient conditions.
Further, we propose gradient-ascent based algorithms to optimize the sum-rate achieved by precoding with finite constellation inputs and treating interference as noise. Simulation study using the proposed algorithms for a $3$-user MIMO GIC with two antennas at each node with $d_i=1$ for all $i$, and with BPSK and QPSK inputs, show more than $0.1$ bits/sec/Hz gain in the ergodic sum-rate over that yielded by precoders obtained from some known IA algorithms, at moderate SNRs. 
\end{abstract}
\begin{keywords}
Gaussian Interference channel, Treating interference as noise, Precoding, Finite constellation, Interference Alignment.
\end{keywords}

\section{Introduction} \label{sec1}
Interference Alignment (IA) has been a focus of intense research on Gaussian interference channels (GICs) in the recent years on the account of the capacity of interference channels being unknown in general and the potential of IA to get close to the sum-capacity for a broad class of interference channels as the signal to noise ratio ($SNR$) tends to infinity. The scaling of the sum-capacity with $log ~SNR$ is known as the sum degrees of freedom (DoF) of the GIC \cite{CaJ1}. The sum-DoF is known to be $K/2$, with probability $1$, for the K-user GIC \cite{CaJ1} when all the nodes are equipped with a single antenna and time-varying channel gains are assumed. The result was proved by using linear precoding at the transmitters over an arbitrarily large number of symbol extensions and zero forcing at the receivers. In a later work \cite{MGMK}, the sum-DoF is shown to be $K/2$ with probability $1$ even in the case of constant channel-coefficients (that are drawn from a continuous distribution), i.e., the channel 
gains do not vary with time, using a non-linear IA technique. When all the nodes are equipped with $M$ antennas, the sum-DoF was shown to be $3M/2$ for the $3$-user GIC \cite{CaJ1}. This result was proved by linear precoding over the transmit antennas without the use of symbol extensions and holds true even in the case of constant channel coefficients. Later \cite{GMK}, with the assumption of constant channel coefficients and using a non-linear IA technique, the sum-DoF when $K\geq \frac{M+N}{\gcd(M,N)}$ was found to be equal to $\frac{MN}{M+N} K$ where, $\gcd(M,N)$ denotes the greatest common divisor of $M$ and $N$, with $M$ being the number of antennas at each transmitter and $N$ being the number of receive antennas at each receiver. All the works cited above assumed full channel state information at all the transmitters (CSIT) and receivers (CSIR). The notion of sum-DoF involves scaling of sum-rate as $log ~SNR$ at high SNR and therefore, Gaussian input alphabets or lattice codes are always used in the 
study of sum-DoF. However, in all practical scenarios finite constellations like $M$-QAM and $M$-PSK are used at the inputs. {\em With the constraint of finite constellation inputs, it is not known whether IA is optimal in some sense}. 

Linear precoding for optimizing the mutual information between the input and the output has been studied for the single user MIMO channel with finite constellation inputs in \cite{PaV}-\cite{XZD}. Constellation rotation for optimizing the sum-capacity for SISO Multiple Access Channel (MAC) with finite constellation inputs has been examined in \cite{HaR} and  linear precoding for weighted sum-rate maximization in MIMO MAC with finite constellation inputs has been studied in \cite{WZX}. Note that linear precoding for the SISO MAC corresponds to constellation rotation at the transmitter. 

Recently, there has been some progress on the analysis of finite constellation effects in $2$-user SISO GIC \cite{KnS}, \cite{AbR}. In \cite{KnS}, constellation rotation was found to increase the constellation constrained sum-capacity of $2$-user SISO Gaussian strong interference channel \cite{HS,CoE}, and in \cite{AbR}, a metric to find the optimum angle of rotation was proposed. In this paper, we examine achievable rate-tuples with linear precoding for $K$-user MIMO Gaussian Interference Channel (GIC) with finite constellation inputs. Specifically, we treat interference as noise, i.e., each transmitter reveals its codebook only to its intended receiver. The maximum rate achievable under such a circumstance for transmitter-$i$ (Tx-$i$) is given by mutual information between the input generated by Tx-$i$ and the output at receiver Rx-$i$. The channel conditions and values of SNR under which the decoding scheme of treating interference as noise with Gaussian alphabet inputs is sum-capacity optimal was found 
for the $2$-user  SISO GIC in \cite{SKC1}-\cite{AnV1}, for the $K$-user SISO GIC in \cite{SKC2} and for the $2$-user MIMO GIC in \cite{SKCV},\cite{AnV2}. In a $K$-user SISO GIC, for given values of channel gains with Gaussian input alphabets, as the SNR tends to infinity, treating interference as noise is not sum-capacity optimal \cite{GaY}. {\em With the constraint of finite constellation inputs, it is not clear whether treating interference as noise is optimal in some sense.} 

First, we need to define a notion of optimality under the constraint of fixed finite constellation inputs and then analyse decoding and transmit schemes with that notion of optimality. Consider a scenario where each transmitter-$i$ (Tx-$i$), requires to send $d_i$ independent complex symbols per channel use that take values from fixed finite constellations with uniform distribution, for $i=1,2,\cdots ,K$, to receiver-$i$ (Rx-$i$). Throughout this paper, we assume that none of the direct channel gains are zero. For a $K$-user MIMO GIC with finite constellation inputs, as a measure of optimality of linear precoding in the high SNR sense, we introduce the notion of  {\em Constellation Constrained Saturation Capacity} (CCSC) which is defined as follows.
\begin{definition}
The maximum rate achieved by Tx-$i$ as SNR tends to infinity, using any linear precoder, when the interference channel-coefficients are zero is termed as the Constellation Constrained Saturation Capacity (CCSC) for Tx-$i$.
\end{definition}
For the ease of exposition, throughout the paper, we assume that the constellations used for the symbols are all the same at all the transmitters, and is of cardinality $M$. Hence, the CCSC for Tx-$i$ is given by $log_2M^{ d_i}$. 

In this paper, with the assumption of constant $K$-user MIMO GIC with full global knowledge of channels gains, and finite constellation inputs, we derive a set of necessary and sufficient conditions on the precoders under which treating interference as noise at Rx-$i$ will achieve a rate for Tx-$i$ that tends to CCSC for Tx-$i$, for all $i$, as SNR tends to infinity. Precoders satisfying these necessary and sufficient conditions exist for all direct and cross channel gains, and are termed as {\em CCSC optimal precoders}. Hence, in the case of finite constellation inputs with the use of appropriate precoders, the rate tuples obtained by treating interference as noise tend to values that are independent of the channel gains. For a $K$-user SISO GIC, this result is in contrast with the Gaussian input alphabet case where the rate tuples obtained by treating interference as noise tend to values dictated by the channel gains, as the SNR tends to infinity \footnote{These results appeared first in an older \textit{
arxiv} version of this paper \cite{AbR2}. Similar results also appeared recently in \cite{WXGMD}.}. Interestingly, the precoders that achieve IA, if feasible, are also CCSC optimal precoders. However, finding precoders that align interference is known to be NP-hard \cite{RSL} in general whereas, the precoders that satisfy the derived necessary and sufficient conditions are easy to find for any given channel-coefficients.

Since finite SNR is of more practical interest, we propose gradient-ascent based algorithms to optimize the precoders for the sum-rate achieved by treating interference as noise with finite constellation inputs. We point out that optimization of mutual information between the input and output in single user MIMO channels with finite constellation inputs was recently solved in \cite{XZD}. Prior to this work, gradient-ascent based algorithms were proposed for optimization of mutual information between the input and output in single user MIMO channels with finite constellation input in \cite{PaV,PRS}. Further, optimizing the minimum Euclidean distance metric using linear precoding in single user MIMO channels was pursued in \cite{CBRB}-\cite{SrR}. The connection between optimizing the minimum Euclidean distance metric using linear precoding and optimizing the mutual information using linear precoding  with finite constellation input was revealed in \cite{PRS}. It was shown that the precoding matrix that 
maximizes the mutual information with finite constellation input converges to the matrix that maximizes the minimum distance between the received constellation vectors, at large SNR. We generalize this connection in the context of $K$-user MIMO GIC (Theorem \ref{thm3}, Section \ref{subsec4a}).

The main contributions of the paper are summarized below.

\begin{itemize}
\item For a constant $K$-user MIMO GIC using finite constellation inputs with precoding, a high SNR approximation for the rate tuples achieved by treating interference as noise at the receivers is derived (Theorem \ref{thm1} in Section \ref{sec3}). Based on this approximation, we derive a set of necessary and sufficient conditions under which the precoders are CCSC optimal (Theorem \ref{thm2}, Section \ref{sec3}). These conditions are satisfied with probability $1$ when the entries of the precoders are chosen from any continuous distribution. It is observed that the precoders that achieve IA, if feasible, are CCSC optimal.
\item For the finite SNR case, we propose a gradient-ascent based algorithms to improve the sum-rate achieved by treating interference as noise using finite constellation inputs with precoding. Simulation study using the proposed algorithms for a $3$-user MIMO GIC with two antennas at each node with $d_i=1$ for all $i$, and with BPSK and QPSK inputs, shows an improvement of over $0.1$ bits/sec/Hz in the ergodic sum-rate over that obtained using some known IA algorithms, at moderate SNRs (Section \ref{subsec4b}).
\end{itemize}

The paper is organized as follows. The system model is formally introduced in Section \ref{sec2}. In Section \ref{sec3}, a set of necessary and sufficient conditions for CCSC optimal precoders is derived. In Section \ref{sec4}, our gradient-ascent based algorithms for optimizing the sum-rate using precoders and treating interference as noise at the receivers is given, and simulation results comparing their performance with respect to that of precoders obtained using some known IA algorithms is presented. Section \ref{sec5} concludes the paper.

 {\em Notations:}  For a random vector $X$ which takes value from the set $\cal X$, we assume some ordering of its elements and use $x^i$ to represent the $i$-th element of $\cal X$. Realization of the random vector $X$ is denoted as $x$. The notation $diag(V_1,V_2,\cdots,V_n)$ denotes a block diagonal matrix formed by the matrices $V_i$, $i=1,2,\cdots,n$. The $i^{\text{th}}$ coordinate of a complex vector $X$ is denoted by $X(i)$. The $2$-norm of a complex vector $X$ is denoted by $||X||$. The cardinality of a set ${\cal X}$ is denoted by $|\cal X|$. For a complex number $a$, $\Re\{a\}$ and $\Im\{a\}$ denote the real and imaginary parts of $a$ respectively. For two complex numbers $a$ and $b$, the notation $a>b$ denotes that $|\Re\{a\}|>|\Re\{b\}|$ and $|\Im\{a\}|>|\Im\{b\}|$. The notation $\underline{0}$ represents the zero vector whose size will be clear from the context. All the logarithms in the paper are to the base $2$.

\section{System Model} \label{sec2}
 \begin{figure}[htbp]
\centering
\includegraphics[totalheight=3.1in,width=2.8in]{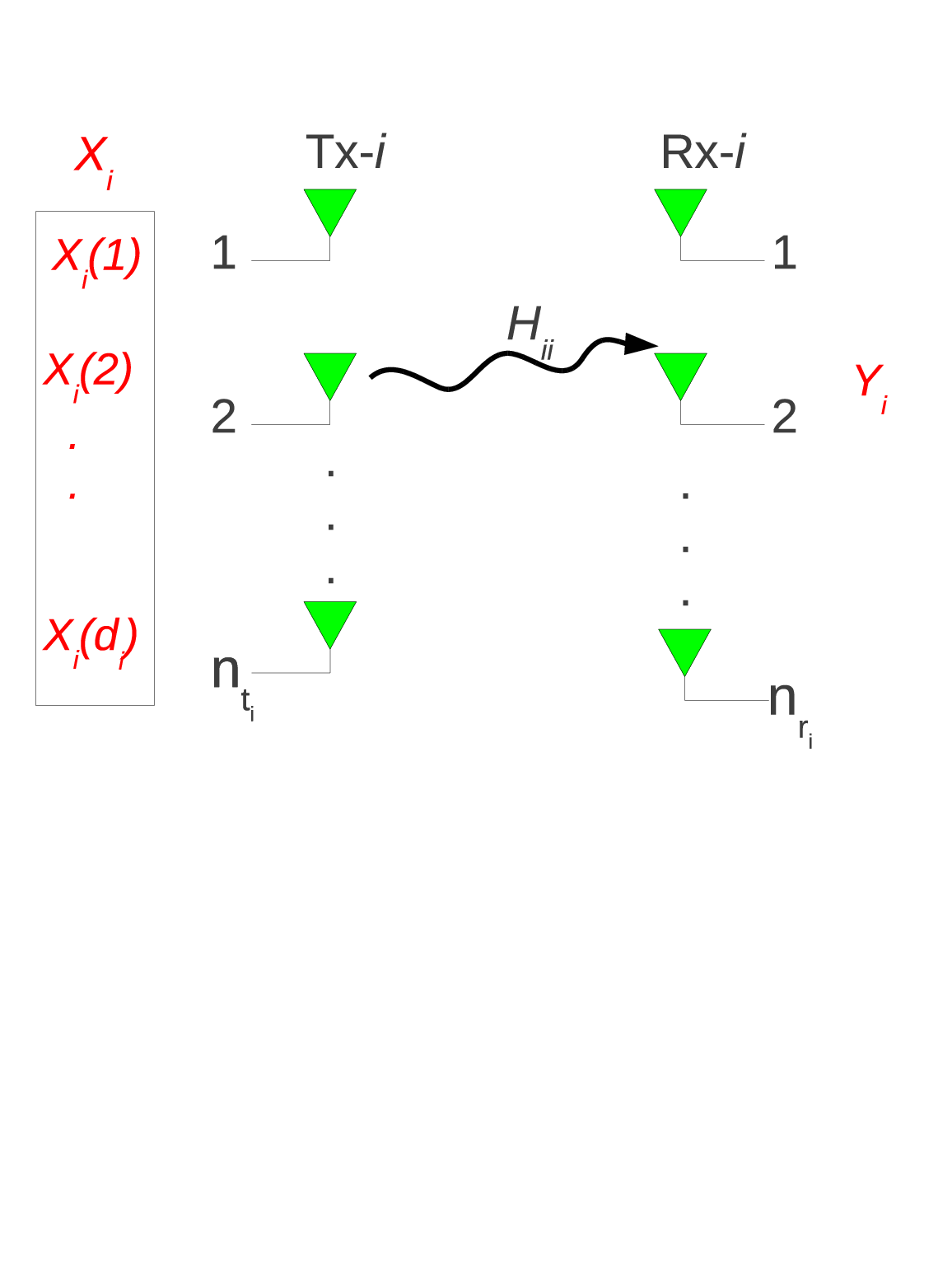}
\vspace{-3cm}
\caption{System Model.}
\label{fig_sys_model}
\end{figure}
Tx-$i$ intends to communicate with Rx-$i$, for $i=1,2,\cdots, K$, as shown in Fig. \ref{fig_sys_model}. Without loss of generality, let $P$ denote the power constraint at all the transmitters. The signal received at Rx-$j$ is given by
\begin{align*}
 Y_j=\sum_{i=1}^{K}\sqrt{P}H_{ij}V_iX_i + N_j
\end{align*}where, $H_{ij}$ denotes the constant channel matrix from Tx-$i$ to Rx-$j$, $V_i$ denotes the precoder at Tx-$i$, $X_i$ denotes the complex symbol vector generated at Tx-$i$, $N_j$ denotes the noise random vector whose coordinates represent independent and identically distributed zero mean unit variance circularly symmetric complex Gaussian random variables. The sizes of the matrices $H_{ij}$, $V_i$, $X_i$, and $N_j$ are given by $n_{r_j}\times n_{t_i}$, $n_{t_i}\times {d_i}$, $ {d_i} \times 1$, and $n_{r_j}\times 1$ respectively, where $n_{r_j}$ and $n_{t_i}$ denote the number of receive and transmit antennas at Rx-$j$ and Tx-$i$ respectively, and ${d_i}$ denotes the number of independent complex symbols per channel use that Tx-$i$ wants to transmit to Rx-$i$. These complex symbols are assumed to take values from finite constellations with uniform distribution over its elements. For simplicity of exposition, we assume that the finite constellations used are all the same at all the transmitters 
and are of cardinality $M$. The results of this paper apply with simple modifications when this is not the case. The finite constellation used is denoted by ${\cal S}$ and is of unit power.

\section{CCSC Optimal Precoders} \label{sec3}
In this section, we derive a set of necessary and sufficient on the precoders for CCSC optimality which is taken to be a measure of optimality for linear precoding in the high SNR regime for the finite constellation input case. Rate achievable for Tx-$i$ by treating interference as noise at Rx-$i$, for all $i$, is given by $R_i<I[X_i;Y_i]$. Our focus will be on the boundary point given by $I[X_i;Y_i]$, for all $i$.  Let $X=[X_1 ~X_2 ~\cdots ~X_K]^T$. The effective channel matrix from all the transmitters to Rx-$i$ is given by $H_i=[H_{1i} ~H_{2i} \cdots ~H_{Ki}]$. Define
\begin{align*}
 V = diag(V_1,V_2,\cdots,V_K).
\end{align*}
Using the chain rule for mutual information \cite{CoT},

{\small
\begin{align*}
 I[X_i;Y_i]=I[X_1,X_2,\cdots,X_K;Y_i]-I[X_1,X_2,\cdots,X_K;Y_i|X_i].
\end{align*}}With uniform distribution assumed over the elements of the constellation, the expression for $I[X_1,X_2,\cdots,X_K;Y_i]$ is given by (\ref{MI_exp}) (at the top of the next page) which is derived in a similar way as in \cite{AbR}. Define the matrix
\begin{align} \label{eqn-define_A}
A^{k_1,k_2}_i=H_iV\left(x^{k_1}-x^{k_2} \right).
\end{align}
The following theorem gives a high SNR approximation for $I[X_1,X_2,\cdots,X_K;Y_i]$.
\begin{theorem}
\label{thm1}
 At high $P$, the mutual information $I[X_1,X_2,\cdots,X_K;Y_i]$ can be approximated by

{\small
\begin{align}
\label{eqn_approx}
 &logM^{\sum_{i=1}^{K}d_k} \\
\nonumber
&-\frac{1}{M^{\sum_{k=1}^{K}d_k}}\hspace{-0.2cm}\sum_{k_1 = 0}^{M^{\sum_{k=1}^{K}d_k}-1}\left[ log  \left( \sum_{k_2 = 0}^{M^{\sum_{k=1}^{K}d_k}-1}e^{ - \left|\left|\sqrt P A^{k_1,k_2}_i\right|\right|^2 }\right) \right].
\end{align}}
\end{theorem}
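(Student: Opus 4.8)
The plan is to begin from the exact expression (\ref{MI_exp}) for $I[X_1,X_2,\cdots,X_K;Y_i]$ and substitute the channel law $Y_i=\sqrt PH_iVX+N_i$. Conditioned on $X=x^{k_1}$ the received vector equals $\sqrt PH_iVx^{k_1}+N$, so the squared distance to the hypothesis $x^{k_2}$ is $\|\sqrt PH_iV(x^{k_1}-x^{k_2})+N\|^2=\|\sqrt PA^{k_1,k_2}+N\|^2$. Carrying this through (\ref{MI_exp}) and writing $E_N[\cdot]$ for the expectation over the noise $N$ yields the exact identity
\begin{align*}
I[X_1,\cdots,X_K;Y_i]=\log M^{\sum_{i=1}^{K}d_i}-\frac{1}{M^{\sum_{i=1}^{K}d_i}}\sum_{k_1}E_N\!\left[\log\sum_{k_2}e^{-\|\sqrt PA^{k_1,k_2}+N\|^2+\|N\|^2}\right].
\end{align*}
Expanding the exponent gives $-\|\sqrt PA^{k_1,k_2}+N\|^2+\|N\|^2=-P\|A^{k_1,k_2}\|^2-2\sqrt P\,\Re\{(A^{k_1,k_2})^HN\}$, so the noise enters only through the cross term $2\sqrt P\,\Re\{(A^{k_1,k_2})^HN\}$, and the claimed approximation (\ref{eqn_approx}) is precisely the value obtained by setting $N=0$ inside the inner log-sum-exp.

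The core of the argument is thus to show that, at high $P$, $E_N[\cdot]$ is well approximated by its integrand evaluated at $N=0$. I would first isolate the diagonal term $k_2=k_1$: since $A^{k_1,k_1}=0$, this summand equals $1$ for every $N$, and the inner sum can be written as $1+\sum_{k_2\neq k_1}e^{-P\|A^{k_1,k_2}\|^2-2\sqrt P\,\Re\{(A^{k_1,k_2})^HN\}}$. Using $\log(1+t)\le t$, each off-diagonal contribution is controlled by the signal decay $e^{-P\|A^{k_1,k_2}\|^2}$ modulated by a noise factor; for typical realizations with $\|N\|=O(1)$ the cross term is only $O(\sqrt P)$ and is therefore subdominant to the $O(P)$ signal exponent, so the inner quantity collapses to its noiseless value $\log\sum_{k_2}e^{-P\|A^{k_1,k_2}\|^2}$ up to vanishing corrections.

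The main obstacle is to make this rigorous, because the naive moment-generating-function identity $E_N[e^{-2\sqrt P\,\Re\{(A^{k_1,k_2})^HN\}}]=e^{P\|A^{k_1,k_2}\|^2}$ exactly cancels the signal decay and yields only a useless $O(1)$ bound; the expectation is in fact dominated by atypical large-noise events. To overcome this I would split $E_N[\cdot]$ into a typical region $\|N\|\le P^{1/2-\delta}$, on which the cross term is $o(P)$ so that the integrand lies within a vanishing amount of its $N=0$ value, and an atypical region $\|N\|>P^{1/2-\delta}$, whose probability is super-exponentially small by the Gaussian tail bound. On the atypical region the integrand is crudely bounded by $\log M^{\sum_{i=1}^{K}d_i}+2\sqrt P\,(\max_{k_1,k_2}\|A^{k_1,k_2}\|)\,\|N\|$, and a Cauchy--Schwarz estimate against the super-exponentially small tail probability shows its contribution to $E_N[\cdot]$ vanishes as $P\to\infty$. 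Combining the two regions gives $E_N[\cdot]=\log\sum_{k_2}e^{-P\|A^{k_1,k_2}\|^2}+o(1)$ for each $k_1$, and averaging over $k_1$ produces (\ref{eqn_approx}).
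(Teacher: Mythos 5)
Your proposal is correct and follows essentially the same route as the paper: both start from the exact expectation form (\ref{MI_exp}), observe that the claimed formula (\ref{eqn_approx}) is the integrand with the noise dropped from the exponent, and justify this by separating the terms with $A^{k_1,k_2}=\underline{0}$ and splitting the noise into a typical region, where the cross term $2\sqrt{P}\,\Re\{(A^{k_1,k_2})^HN\}$ is dominated by the signal exponent $P\|A^{k_1,k_2}\|^2$, and an atypical large-noise region whose contribution is negligible. If anything, your handling of the atypical region (a $P$-dependent radius $P^{1/2-\delta}$ combined with Cauchy--Schwarz against the Gaussian tail) is more rigorous than the paper's argument, which fixes a constant per-coordinate noise box $|n_{jR}(l)|\leq 3$, $|n_{jI}(l)|\leq 3$ and asserts informally that the complementary event can be neglected at high $P$.
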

\begin{proof}
\begin{figure*}
\scriptsize
\begin{align}
\centering
\label{MI_exp}
&I(X_1,X_2,\cdots,X_K;Y_i)= logM^{\sum_{k=1}^{K}d_k} - \frac{1}{M^{\sum_{k=1}^{K}d_k}} \sum_{k_1 = 0}^{M^{\sum_{k=1}^{K}d_k}-1} E_{N_i} \left[log  \sum_{k_2=0}^{M^{\sum_{k=1}^{K}d_k}-1} exp - \left( \left|\left|N_i +\sqrt P H_iV\left( x^{k_1}-x^{k_2} \right)\right|\right|^2-\left|\left|N_i\right|\right|^2 \right) \right]\\
\label{MI_eqn1}
&=logM^{\sum_{k=1}^{K}d_k} - \frac{1}{M^{\sum_{k=1}^{K}d_k}} \sum_{k_1 = 0}^{M^{\sum_{k=1}^{K}d_k}-1} E_{N_i} \left[log  \left(1+|{\cal A}^{k_1}_i|+\sum_{\substack{{k_2 \neq k_1}\\{k_2 \notin {\cal A}^{k_1}_i}}} exp - \left( \left|\left|N_i +\sqrt P A^{k_1,k_2}_i\right|\right|^2-\left|\left|N_i\right|\right|^2 \right) \right)\right]\\
\label{MI_eqn2}
&=logM^{\sum_{k=1}^{K}d_k} - \frac{1}{M^{\sum_{k=1}^{K}d_k}} \sum_{k_1 = 0}^{M^{\sum_{k=1}^{K}d_k}-1} E_{N_j} \left[log  \left(1+|{\cal A}^{k_1}_i|+\sum_{\substack{{k_2 \neq k_1}\\{k_2 \notin {\cal A}^{k_1}_i}}} exp -\left(\sum_{l \notin {\cal A}^{k_1,k_2}_i} \left( \left|N_i(l) +\sqrt P A^{k_1,k_2}_i(l)\right|^2-\left|N_i(l)\right|^2 \right) \right)\right)\right]\\
\nonumber
&=logM^{\sum_{k=1}^{K}d_k} -\frac{1}{M^{\sum_{k=1}^{K}d_k}}\sum_{k_1 = 0}^{M^{\sum_{k=1}^{K}d_k}-1}\left[\int_{\substack{{(n_{iR}(l),n_{iI}(l))}\\{l \notin {\cal A}^{k_1,k_2}_i}}} \prod_{l \notin {\cal A}^{k_1,k_2}_i}p_{N_i(l)}(n_j(l)) \right.\\
\label{MI_eqn3}
&\hspace{5.7cm} \left. \times ~log  \left(1+|{\cal A}^{k_1}_i|+\sum_{\substack{{k_2 \neq k_1}\\{k_2 \notin {\cal A}^{k_1}_i}}} e^{ -\left(\sum_{l \notin {\cal A}^{k_1,k_2}_i} \left( \left|n_i(l) +\sqrt P A^{k_1,k_2}_i(l)\right|^2-\left|n_i(l)\right|^2 \right)\right)} \right) \prod_{l \notin {\cal A}^{k_1,k_2}_i} dn_i(l)\right] \\
\nonumber
&\approx logM^{\sum_{k=1}^{K}d_k} -\frac{1}{M^{\sum_{k=1}^{K}d_k}}\sum_{k_1 = 0}^{M^{\sum_{k=1}^{K}d_k}-1}\left[\int_{\substack{{(n_{iR}(l),n_{iI}(l))}\\{l \notin {\cal A}^{k_1,k_2}_i}}} \prod_{l \notin {\cal A}^{k_1,k_2}_i}p_{N_i(l)}(n_i(l)) \right.\\
\label{MI_eqn4}
&\hspace{5.7cm} \left. \times ~log  \left(1+|{\cal A}^{k_1}_i|+\sum_{\substack{{k_2 \neq k_1}\\{k_2 \notin {\cal A}^{k_1}_i}}} e^{ -\left(\sum_{l \notin {\cal A}^{k_1,k_2}_i} \left( \left|\sqrt P A^{k_1,k_2}_i(l)\right|^2 -\left|n_i(l)\right|^2 \right)\right)} \right) \prod_{l \notin {\cal A}^{k_1,k_2}_i} dn_i(l)\right] \\
\nonumber
&\approx logM^{\sum_{k=1}^{K}d_k} -\frac{1}{M^{\sum_{k=1}^{K}d_k}}\sum_{k_1 = 0}^{M^{\sum_{k=1}^{K}d_k}-1}\left[\int_{\substack{{(n_{jR}(l),n_{jI}(l))}\\{l \notin {\cal A}^{k_1,k_2}_i}}} \prod_{l \notin {\cal A}^{k_1,k_2}_i}p_{N_j(l)}(n_j(l)) \right.\\
\label{MI_eqn5}
&\hspace{5.7cm} \left. \times ~log  \left(1+|{\cal A}^{k_1}_i|+\sum_{\substack{{k_2 \neq k_1}\\{k_2 \notin {\cal A}^{k_1}_i}}} e^{ -\left(\sum_{l \notin {\cal A}^{k_1,k_2}_i} \left( \left|\sqrt P A^{k_1,k_2}_i(l)\right|^2\right)\right)} \right) \prod_{l \notin {\cal A}^{k_1,k_2}_i} dn_j(l)\right] \\
\label{MI_eqn6}
&= logM^{\sum_{k=1}^{K}d_k} -\frac{1}{M^{\sum_{k=1}^{K}d_k}}\sum_{k_1 = 0}^{M^{\sum_{k=1}^{K}d_k}-1}\left[ log  \left(1+|{\cal A}^{k_1}_i|+\sum_{\substack{{k_2 \neq k_1}\\{k_2 \notin {\cal A}^{k_1}_i}}} e^{ -\left(\sum_{l \notin {\cal A}^{k_1,k_2}_i} \left( \left|\sqrt P A^{k_1,k_2}_i(l)\right|^2\right)\right)} \right) \right] \\
\label{MI_eqn7}
&= logM^{\sum_{k=1}^{K}d_k} -\frac{1}{M^{\sum_{k=1}^{K}d_k}}\sum_{k_1 = 0}^{M^{\sum_{k=1}^{K}d_k}-1}\left[ log  \left(1+|{\cal A}^{k_1}_i|+\sum_{\substack{{k_2 \neq k_1}\\{k_2 \notin {\cal A}^{k_1}_i}}} e^{ -\left(\sum_{l=1}^{n_{r_i}} \left( \left|\sqrt P A^{k_1,k_2}_i(l)\right|^2\right)\right)} \right) \right] 
\end{align}
\hrule
\vspace{-0.5cm}
\end{figure*}
Define the sets
\begin{align}
\label{set1}
&{\cal A}^{k_1}_i=\left\{ k_2 \neq k_1 \mid A^{k_1,k_2}_i=0 \right\},\\
\label{set2}
&{\cal A}^{k_1,k_2}_i=\left\{ l \mid A^{k_1,k_2}_i(l)=0 \right\}.
\end{align}The expression in (\ref{MI_exp}) is re-written as in (\ref{MI_eqn1}) and (\ref{MI_eqn2}). At high values of $P$, the following inequality holds good for $k_2 \neq k_1$, $k_2 \notin {\cal A}^{k_1}_i$, and for all $l \notin {\cal A}^{k_1,k_2}_i$.

{\small
\begin{align}
\label{very_greater}
  |\Re\{\sqrt P A^{k_1,k_2}_i(l)\}| >> \frac{3}{\sqrt 2} \text { and } |\Im\{\sqrt P A^{k_1,k_2}_i(l)\}| >> \frac{3}{\sqrt 2}.
\end{align}} Now, note that for $k_2 \neq k_1$, $k_2 \notin {\cal A}^{k_1}_i$, and for all $l \notin {\cal A}^{k_1,k_2}_i$, $\sqrt P A^{k_1,k_2}_i(l)>>n_j(l)$ when $|n_{jR}(l)|\leq 3$ and $|n_{jI}(l)|\leq 3$ where, $n_{jR}(l)$ and $n_{jI}(l)$ represent the real and imaginary parts of $n_j(l)$. The value of $n_{j}(l)$ becomes comparable to $ \sqrt P A^{k_1,k_2}_i(l)$ only if either  $|n_{jR}(l)|> 3$ or $|n_{jI}(l)| > 3$. However, the probability of such an event occurring is extremely small because the variances of the Gaussian random variables $n_{jR}(l)$ and $n_{jI}(l)$ are equal to $\frac{1}{2}$. Hence, the contribution of such an event to the integral in (\ref{MI_eqn3}) is very small that it can be neglected at high $P$. Therefore, on account of (\ref{very_greater}), the approximations in (\ref{MI_eqn4}) and (\ref{MI_eqn5}) are valid. The equation (\ref{MI_eqn6}) follows from the fact that probability distribution integrates to $1$, (\ref{MI_eqn7}) follows from (\ref{set2}), and the proposed approximation 
in (\ref{eqn_approx}) is obtained directly by re-writing (\ref{MI_eqn7}).
\end{proof}

\begin{remark}
Note that we cannot straightforwardly argue that at high powers $P$, the noise $N_i$ in (\ref{MI_exp}) can be neglected. This is because the value of $||N_i||$ can be of the order of $||A^{k_1,k_2}||$. The proof uses the fact that the probability of such an event is very small and hence, can be neglected. The greater the power, the better is the approximation. 
A similar approximation was developed for the SISO case in \cite{AbR}. The approximation given in Theorem \ref{thm1} is a generalization of the approximation derived in \cite{AbR}.
\end{remark}

Let $X_{\not{\hspace{0.05cm}i}}=[X_1 ~X_2 ~\cdots ~X_{i-1} ~X_{i+1} ~\cdots ~X_K]^T$. The channel matrix from all the transmitters, with the exclusion of Tx-$i$, to Rx-$i$ is given by 
\begin{align*}
 H_{\not{\hspace{0.05cm}i}}=[H_{11} ~H_{12} ~\cdots~ H_{i-1, i} ~H_{i+1, i} ~\cdots~ H_{Ki}].
\end{align*}Define the matrices 
\begin{align} \nonumber
&V_{\not{\hspace{0.05cm}i}}=\text{diag}(V_1,V_2,\cdots,V_{i-1},V_{i+1},\cdots,V_K).\\
\label{eqn-define_B}
&B^{i_1,i_2}_i=H_{\not{\hspace{0.05cm}i}}V_{\not{\hspace{0.05cm}i}}\left(x_{\not{\hspace{0.05cm}i}}^{i_1}-x_{\not{\hspace{0.05cm}i}}^{i_2} \right),
\end{align}for $i_1,i_2=0,1,\cdots, M^{\sum_{j \neq i}d_j}-1$. Similar to (\ref{eqn_approx}), we have the following approximation for $I[X_1,X_2,\cdots,X_K;Y_i|X_i]$ at high $P$.

{\small
\begin{align}
\label{eqn_approx2}
 &log~M^{\sum_{\substack{{j\neq i}\\{j=1}}}^{K}d_i} \\
\nonumber
&-\frac{1}{ M^{\sum_{j \neq i}d_j}}\hspace{-0.2cm}\sum_{i_1 = 0}^{ M^{\sum_{j \neq i}d_j}-1}\left[ log  \left( \sum_{i_2 = 0}^{ M^{\sum_{j \neq i}d_j}-1}e^{ - \left|\left|\sqrt P B^{i_1,i_2}_i\right|\right|^2 }\right) \right]
\end{align}}Hence, a high SNR approximation for $I[X_i;Y_i]$ is given by

{\small
\begin{align}
\nonumber
&I[X_i;Y_i] \approx log~M^{d_i} \\
\label{actual_approx}
&-\frac{1}{M^{\sum_{k=1}^{K}d_k}}\hspace{-0.2cm}\sum_{k_1 = 0}^{M^{\sum_{k=1}^{K}d_k}-1}\left[ log  \left( \sum_{k_2 = 0}^{M^{\sum_{k=1}^{K}d_k}-1}e^{ - \left|\left|\sqrt P A^{k_1,k_2}_i\right|\right|^2 }\right) \right].\\
\nonumber
&+\frac{1}{ M^{\sum_{j \neq i}d_j}}\hspace{-0.2cm}\sum_{i_1 = 0}^{ M^{\sum_{j \neq i}d_j}-1}\left[ log  \left( \sum_{i_2 = 0}^{ M^{\sum_{j \neq i}d_j}-1}e^{ - \left|\left|\sqrt P B^{i_1,i_2}_i\right|\right|^2 }\right) \right].
\end{align}}
Now, define the set 
\begin{align}
&{\cal B}^{i_1}=\left\{ i_2 \neq i_1 \mid B^{i_1,i_2}_i=0 \right\}.
\end{align}
The following theorem gives a set of necessary and sufficient conditions under which the above approximation tends to $log~M^{d_i}$ as $P$ tends to infinity and hence, gives a set of necessary and sufficient conditions under which the precoders are CCSC optimal.

\begin{theorem} \label{thm2}
 The approximation for $I[X_i;Y_i]$ given in (\ref{actual_approx}) tends to $log~M^{d_i}$ as $P$ tends to infinity iff

{\vspace{-0.4cm} \footnotesize
\begin{align}
\label{CCSC-optimality}
 &H_{ii}V_i(x_i^{p_{i1}}-x_i^{p_{i2}})+\sum_{\substack{{k \neq i}\\{k=1}}}^{K}H_{kj}V_k(x_k^{p_{k1}}-x_k^{p_{k2}}) \neq \underline{0}, \\
\nonumber
&~~~~\forall  ~p_{i1} \neq p_{i2}, ~\forall  ~p_{k1},~\forall p_{k2}
\end{align}}where, $p_{k1},p_{k2}=0,1,\cdots,M^{d_l}-1$.
\end{theorem}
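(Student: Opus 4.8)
The plan is to evaluate the $P\to\infty$ limit of the approximation in (\ref{actual_approx}) term by term, reduce each log-sum-exp to a count of vanishing difference vectors, and then show that condition (\ref{CCSC-optimality}) is exactly what makes the signal and interference counts cancel. First I would observe that in each inner sum $\sum_{k_2}e^{-\|\sqrt P A^{k_1,k_2}\|^2}$, every summand with $A^{k_1,k_2}\neq\underline 0$ decays to $0$ while every summand with $A^{k_1,k_2}=\underline 0$ stays equal to $1$; using the set in (\ref{set1}), the inner sum therefore converges to $1+|{\cal A}^{k_1}|$, and likewise the interference inner sum converges to $1+|{\cal B}^{i_1}|$. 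Hence (\ref{actual_approx}) tends to
\begin{align*}
\log M^{d_i}-\frac{1}{M^{\sum_{i=1}^{K}d_i}}\sum_{k_1}\log\!\left(1+|{\cal A}^{k_1}|\right)+\frac{1}{M^{\sum_{j\neq i}d_j}}\sum_{i_1}\log\!\left(1+|{\cal B}^{i_1}|\right),
\end{align*}
so the whole theorem reduces to comparing the two averaged logarithmic counts.

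The second step is a block-structure decomposition. Writing each global index $k_1$ as a pair $(s_i,s_{\not i})$, where $s_i$ indexes the Tx-$i$ symbol vector (out of $M^{d_i}$) and $s_{\not i}$ indexes the symbols of the remaining transmitters, the block-diagonal form of $V$ gives $A^{k_1,k_2}=H_{ii}V_i(x_i^{s_i}-x_i^{t_i})+B^{s_{\not i},t_{\not i}}$ for $k_2=(t_i,t_{\not i})$. The key combinatorial claim is that $|{\cal A}^{(s_i,s_{\not i})}|\ge|{\cal B}^{s_{\not i}}|$ for every $k_1$, with equality precisely when no vanishing $A^{k_1,k_2}$ has $t_i\neq s_i$. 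Indeed, the map $t_{\not i}\mapsto(s_i,t_{\not i})$ injects ${\cal B}^{s_{\not i}}$ into ${\cal A}^{(s_i,s_{\not i})}$, since $t_{\not i}\in{\cal B}^{s_{\not i}}$ forces $B^{s_{\not i},t_{\not i}}=\underline 0$ with $t_{\not i}\neq s_{\not i}$, whence $A^{(s_i,s_{\not i}),(s_i,t_{\not i})}=\underline 0$; the surplus elements of ${\cal A}^{(s_i,s_{\not i})}$ are exactly the indices $(t_i,t_{\not i})$ with $t_i\neq s_i$ making $H_{ii}V_i(x_i^{s_i}-x_i^{t_i})+B^{s_{\not i},t_{\not i}}=\underline 0$.

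For sufficiency, I would note that condition (\ref{CCSC-optimality}) is precisely the statement that no such surplus index exists, so $|{\cal A}^{(s_i,s_{\not i})}|=|{\cal B}^{s_{\not i}}|$ for all $s_i,s_{\not i}$. Summing $\log(1+|{\cal A}^{k_1}|)$ over the $M^{d_i}$ choices of $s_i$ produces a factor $M^{d_i}$ that converts $\tfrac{1}{M^{\sum_i d_i}}$ into $\tfrac{1}{M^{\sum_{j\neq i}d_j}}$, making the two averaged counts identical, so the limit collapses to $\log M^{d_i}$. For necessity I would argue the contrapositive: if (\ref{CCSC-optimality}) fails, some $k_1=(s_i,s_{\not i})$ admits a surplus index, giving $|{\cal A}^{k_1}|>|{\cal B}^{s_{\not i}}|$ for that $k_1$ while $\ge$ holds for all others; strict monotonicity of $\log(1+\cdot)$ then makes the signal average strictly exceed the interference average, forcing the limit strictly below $\log M^{d_i}$.

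The main obstacle I anticipate is bookkeeping rather than analysis: keeping the two different normalizations and index ranges ($M^{\sum_i d_i}$ versus $M^{\sum_{j\neq i}d_j}$) aligned through the pairing $k_1=(s_i,s_{\not i})$, and verifying that the injection above is tight, so that the count gap is governed \emph{entirely} by the $t_i\neq s_i$ collisions named in (\ref{CCSC-optimality}). Once that correspondence is pinned down, both directions follow from strict monotonicity of the logarithm.
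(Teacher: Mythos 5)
Your proposal is correct and follows essentially the same route as the paper's proof: the paper likewise reduces each log-sum-exp to the counts $|{\cal A}^{k_1}|$ and $|{\cal B}^{i_1}|$, then splits each ${\cal A}^{k_1}$ (via its sets ${\cal C}^{p_{11},\cdots,p_{K1}}_1$ and ${\cal C}^{p_{11},\cdots,p_{K1}}_2$) into the part with $p_{i2}=p_{i1}$, which is in bijection with the corresponding ${\cal B}$-count, and the surplus part with $p_{i2}\neq p_{i1}$, concluding by monotonicity of $\log(1+\cdot)$ exactly as you do. Your injection $t_{\not{\hspace{0.05cm}i}}\mapsto(s_i,t_{\not{\hspace{0.05cm}i}})$ is just a cleaner phrasing of the paper's one-one correspondence between the families of sets, and the normalization bookkeeping over $s_i$ matches the paper's conversion of $\tfrac{1}{M^{\sum_{j\neq i}d_j}}$ into $\tfrac{1}{M^{\sum_{i}d_i}}$.
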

\begin{proof}
The summation-term of the second term in (\ref{actual_approx}) is re-written as

{\footnotesize
\begin{align}
\nonumber
&\sum_{k_1 = 0}^{M^{\sum_{k=1}^{K}d_k}-1} \left[ log  \left(1+|{\cal A}^{k_1}_i|+\sum_{\substack{{k_2 \neq k_1}\\{k_2 \notin {\cal A}^{k_1}_i}}} e^{ -P\left( \left|\left| A^{k_1,k_2}_i\right|\right|^2\right)} \right) \right]\\
\label{eqn-use_in_dmin_1}
&=\sum_{k_1 = 0}^{M^{\sum_{k=1}^{K}d_k}-1} log  \left(1+|{\cal A}^{k_1}_i| \right)+\\
\nonumber
&\hspace{2cm}log  \left(1+\frac{1}{1+|{\cal A}^{k_1}_i|}\sum_{\substack{{k_2 \neq k_1}\\{k_2 \notin {\cal A}^{k_1}_i}}} e^{ -P\left( \left|\left| A^{k_1,k_2}_i\right|\right|^2\right)} \right)\\
\nonumber
&\xrightarrow{P \rightarrow \infty} \sum_{k_1 = 0}^{M^{\sum_{k=1}^{K}d_k}-1} log  \left(1+|{\cal A}^{k_1}_i| \right).
\end{align}}Similarly, the summation-term of the last term in (\ref{actual_approx}) tends to $ \sum_{i_1 = 0}^{M^{\sum_{j \neq i}d_j}} log  \left(1+|{\cal B}^{i_1}_i| \right)$ as $P$ tends to infinity. Hence, as $P$ tends to infinity, (\ref{actual_approx}) tends to

{\footnotesize
\begin{align}
\label{eqn-in-terms-of-set}
& log~M^{d_i}-\frac{1}{M^{\sum_{k=1}^{K}d_k}}\sum_{k_1 = 0}^{M^{\sum_{k=1}^{K}d_k}-1} log  \left(1+|{\cal A}^{k_1}_i| \right)\\
\nonumber
&\hspace{2cm}+ \frac{1}{ M^{\sum_{j \neq i}d_j}}\sum_{i_1 = 0}^{M^{\sum_{j \neq i}d_j}} log  \left(1+|{\cal B}^{i_1}_i| \right).
\end{align}}Now, define the sets 

{\footnotesize
\begin{align}
\nonumber
& {\cal C}^{p_{11},\cdots,p_{K1}}_i=\{~(p_{12},p_{22},\cdots,p_{K2})\neq(p_{11},p_{21},\cdots,p_{K1})\mid \\
\nonumber
&H_{ii}V_i(x_i^{p_{i1}}-x_i^{p_{i2}})+\sum_{\substack{{k \neq i}\\{k=1}}}^{K}H_{kj}V_k(x_k^{p_{k1}}-x_k^{p_{k2}}) = \underline{0} ~\},\\
\nonumber
& {\cal D}^{p_{11},\cdots,p_{i-1,1},p_{i+1,1},\cdots, p_{K1}}_i\\
\label{eqn-define_calD}
&=\{~({p_{12},\cdots,p_{i-1,2},p_{i+1,2},\cdots, p_{K2}})\\
\nonumber
&~~~~~~\neq({p_{11},\cdots,p_{i-1,1},p_{i+1,1},\cdots, p_{K1}})\mid \\
\nonumber
&~~~~~~\sum_{\substack{{k \neq i}\\{k=1}}}^{K}H_{kj}V_k(x_k^{p_{k1}}-x_k^{p_{k2}}) = \underline{0} ~\}
\end{align}}where, $p_{l1},p_{l2}=0,1,\cdots,M^{d_l}-1$. Observe that the set of all ${\cal C}^{p_{11},\cdots,p_{K1}}_i$ has a one-one correspondence with the set of all ${\cal A}^{k_1}_i$, and the set of all ${\cal D}^{p_{11},\cdots,p_{i-1,1},p_{i+1,1},\cdots, p_{K1}}_i$ has a one-one correspondence with the set of all ${\cal B}^{i_1}_i$. Hence, (\ref{eqn-in-terms-of-set}) can be re-written as

{\footnotesize
\begin{align}
\nonumber
& log~M^{d_i}-\frac{1}{M^{\sum_{k=1}^{K}d_k}}\sum_{p_{11} = 0}^{M^{d_1}-1} \cdots \sum_{p_{K1} = 0}^{M^{d_K}-1} log  \left(1+|{\cal C}^{p_{11},\cdots,p_{K1}}_i| \right)\\
\label{eqn-in-terms-of-set-1}
&+ \frac{1}{ M^{\sum_{j \neq i}d_j}}\sum_{p_{11} = 0}^{M^{d_1}-1} \cdots \sum_{p_{i-1,1} = 0}^{M^{d_{i-1}}-1}\sum_{p_{i+1,1} = 0}^{M^{d_{i+1}}-1}\cdots \sum_{p_{K1} = 0}^{M^{d_K}-1} \\
\nonumber
&\hspace{3cm}log  \left(1+|{\cal D}^{p_{11},\cdots,p_{i-1,1},p_{i+1,1},\cdots, p_{K1}}_i| \right).
\end{align}}
Now, the set ${\cal C}^{p_{11},\cdots,p_{K1}}_i$ can be written as a disjoint union of two sets, i.e., 
\begin{align}
\label{eqn-union}
{\cal C}^{p_{11},\cdots,p_{K1}}_i={\cal C}^{p_{11},\cdots,p_{K1}}_{1i} \cup {\cal C}^{p_{11},\cdots,p_{K1}}_{2i} 
\end{align}
where, ${\cal C}^{p_{11},\cdots,p_{K1}}_{1i}$ and ${\cal C}^{p_{11},\cdots,p_{K1}}_{2i}$ are defined in (\ref{eqn-set-union1}) and (\ref{eqn-set-union2}) respectively (given at the top of the next page). The set ${\cal C}^{p_{11},\cdots,p_{K1}}_{1i}$ can be re-defined as in (\ref{eqn-set-union1a}). Since {\small $ \sum_{\substack{{k \neq i}\\{k=1}}}^{K}H_{kj}V_k(x_k^{p_{k1}}-x_k^{p_{k2}})$} is independent of the indices $p_{i1}$ and $p_{i2}$, the set ${\cal C}^{p_{11},\cdots,p_{K1}}_{1i}$ is the same for all $p_{i1}=0,1,\cdots,M^{d_i}-1$.
 \begin{figure*}
\scriptsize
\begin{align}
\label{eqn-set-union1}
 &{\cal C}^{p_{11},\cdots,p_{K1}}_{1i} = \{~(p_{12},p_{22},\cdots,p_{K2})\neq(p_{11},p_{21},\cdots,p_{K1}) \text{ for } p_{i2}=p_{i1} \mid  H_{ii}V_i(x_i^{p_{i1}}-x_i^{p_{i2}})+\sum_{\substack{{k \neq i}\\{k=1}}}^{K}H_{kj}V_k(x_k^{p_{k1}}-x_k^{p_{k2}}) =\underline{0} ~\},\\
\label{eqn-set-union1a}
&\Rightarrow{\cal C}^{p_{11},\cdots,p_{K1}}_{1i} = \{~(p_{12},p_{22},\cdots,p_{K2})\neq(p_{11},p_{21},\cdots,p_{K1}) \text{ for } p_{i2}=p_{i1} \mid  \sum_{\substack{{k \neq i}\\{k=1}}}^{K}H_{kj}V_k(x_k^{p_{k1}}-x_k^{p_{k2}}) = \underline{0} ~\}\\
\label{eqn-set-union2}
&{\cal C}^{p_{11},\cdots,p_{K1}}_{2i} = \{~(p_{12},p_{22},\cdots,p_{K2})\neq(p_{11},p_{21},\cdots,p_{K1}) \text{ for }  p_{i2}\neq p_{i1} \mid   H_{ii}V_i(x_i^{p_{i1}}-x_i^{p_{i2}})+\sum_{\substack{{k \neq i}\\{k=1}}}^{K}H_{kj}V_k(x_k^{p_{k1}}-x_k^{p_{k2}}) = \underline{0} ~\}
\end{align}
\hrule
\end{figure*}Now, note that the set of all ${\cal C}^{p_{11},\cdots,p_{i-1,1},0,p_{i+1,1},\cdots,p_{K1}}_1$ has a one-one correspondence with the set of all ${\cal D}^{p_{11},\cdots,p_{i-1,1},p_{i+1,1},\cdots, p_{K1}}_i$ which follows from the definitions of the respective sets. Hence, (\ref{eqn-in-terms-of-set-1}) can be re-written as

{\footnotesize
\begin{align}
 \nonumber
& log~M^{d_i}\\
 \nonumber
&-\frac{1}{M^{\sum_{k=1}^{K}d_k}}\sum_{p_{11} = 0}^{M^{d_1}-1} \cdots \sum_{p_{i-1,1} = 0}^{M^{d_{i-1}}-1}\sum_{p_{i,1} = 0}^{M^{d_{i}}-1}\sum_{p_{i+1,1} = 0}^{M^{d_{i+1}}-1} \cdots \sum_{p_{K1} = 0}^{M^{d_K}-1} \\
\nonumber
&~~~log  \left(1+|{\cal D}^{p_{11},\cdots,p_{i-1,1},p_{i+1,1},\cdots, p_{K1}}_i|+|{\cal C}^{p_{11},\cdots,p_{K1}}_{2i}| \right)\\
 \nonumber
&+ \frac{1}{ M^{\sum_{j \neq i}d_j}}\sum_{p_{11} = 0}^{M^{d_1}-1} \cdots \sum_{p_{i-1,1} = 0}^{M^{d_{i-1}}-1}\sum_{p_{i+1,1} = 0}^{M^{d_{i+1}}-1}\cdots \sum_{p_{K1} = 0}^{M^{d_K}-1} \\
\label{eqn-use_in_dmin_2}
&\hspace{3cm}log  \left(1+|{\cal D}^{p_{11},\cdots,p_{i-1,1},p_{i+1,1},\cdots, p_{K1}}_i| \right)
\end{align}}
{\footnotesize
\begin{align*}
&= log~M^{d_i}\\
&-\frac{1}{M^{\sum_{k=1}^{K}d_k}}\sum_{p_{11} = 0}^{M^{d_1}-1} \cdots \sum_{p_{i-1,1} = 0}^{M^{d_{i-1}}-1}\sum_{p_{i,1} = 0}^{M^{d_{i}}-1}\sum_{p_{i+1,1} = 0}^{M^{d_{i+1}}-1} \cdots \sum_{p_{K1} = 0}^{M^{d_K}-1} \\
&~~~log  \left(1+|{\cal D}^{p_{11},\cdots,p_{i-1,1},p_{i+1,1},\cdots, p_{K1}}_i|+|{\cal C}^{p_{11},\cdots,p_{K1}}_{2i}| \right)\\
&+ \frac{1}{M^{\sum_{k=1}^{K}d_k}}\sum_{p_{11} = 0}^{M^{d_1}-1} \cdots \sum_{p_{i-1,1} = 0}^{M^{d_{i-1}}-1}\sum_{p_{i,1} = 0}^{M^{d_{i}}-1}\sum_{p_{i+1,1} = 0}^{M^{d_{i+1}}-1}\cdots \sum_{p_{K1} = 0}^{M^{d_K}-1} \\
\nonumber
&\hspace{3cm}log  \left(1+|{\cal D}^{p_{11},\cdots,p_{i-1,1},p_{i+1,1},\cdots, p_{K1}}_i| \right).
\end{align*}}Clearly, if $|{\cal C}^{p_{11},\cdots,p_{K1}}_{2i}|>1$ for some $(p_{11},\cdots,p_{K1})$ then, the second term in the above equation is strictly greater than the last term and hence, as $P$ tends to infinity, $I[X_i;Y_i]$ tends to a value that is strictly less than $log~M^{d_i}$. If  $|{\cal C}^{p_{11},\cdots,p_{K1}}_{2i}|=0$ for all $p_{l1}$ then, the second term in the above equation is equal to the last term and hence, $I[X_i;Y_i]$ tends to a value equal to $log~M^{d_i}$. Thus, $I[X_i;Y_i]$ tends to $log~M^{d_i}$ as $P$ tends to infinity iff (\ref{CCSC-optimality}) is satisfied.
\end{proof}

\begin{remark}
 The result of Theorem \ref{thm2} means that the rate achieved by treating interference as noise at high $P$ tends to CCSC for Tx-$i$ iff, in the absence of the Gaussian noise, two different symbol vectors $x^{p_{i1}}_i$ and $x^{p_{i2}}_i$ sent by Tx-$i$ should not map to the same symbol vector at Rx-$i$ for any data symbol transmitted by the interfering transmitters.
\end{remark}

\begin{remark}
For a given value of channel gains with none of the direct channel gains being $0$, when the entries of the precoders are chosen from any continuous distribution (say, standard normal distribution) the probability of the event

{\vspace{-0.4cm} \footnotesize
\begin{align*}
H_{ii}V_i(x_i^{p_{i1}}-x_i^{p_{i2}})+\sum_{\substack{{k \neq i}\\{k=1}}}^{K}H_{kj}V_k(x_k^{p_{k1}}-x_k^{p_{k2}}) = \underline{0},
\end{align*}}to occur for any $p_{i1} \neq p_{i2}$ and for any $(p_{k1},~p_{k2})$ is zero. By appropriate scaling of the precoders thus obtained, with probability $1$, we have CCSC optimal precoders.
\end{remark}

\begin{remark}
Interference alignment, if feasible \cite{GSB} for the given values of $n_{t_i}$, $n_{r_i}$, and $d_i$ involves finding precoders such that the signal sub-space at Rx-$i$, generated by $[H_{ii}V_i]$, is linearly independent of the interference sub-space, generated by $[H_{1i}V_1 ~\cdots ~H_{i-1,i}V_{i-1} ~H_{i+1,i}V_{i+1} ~\cdots ~H_{K,i}V_{K}]$, and the matrix $[H_{ii}V_i]$ is full-rank \cite{CaJ1}. The CCSC optimality condition in (\ref{CCSC-optimality}) can be rewritten as

{\footnotesize
\begin{align*}
 [H_{ii}V_i ~~H_{1i}V_1 ~\cdots ~H_{i-1,i}V_{i-1} ~~H_{i+1,i}V_{i+1} ~\cdots ~H_{K,i}V_{K}]\\
\times
\begin{bmatrix}
&(x_i^{p_{i1}}-x_i^{p_{i2}})\\
&(x_1^{p_{11}}-x_1^{p_{12}})\\
&\vdots\\
&(x_{i-1}^{p_{i-1,1}}-x_{i-1}^{p_{i-1,2}})\\
&(x_{i+1}^{p_{i+1,1}}-x_{i+1}^{p_{i+1,2}})\\
&\vdots\\
&(x_K^{p_{K1}}-x_K^{p_{K2}})
\end{bmatrix}\neq \underline{0}, ~\forall  ~p_{i1} \neq p_{i2}.
\end{align*}}Since, with precoders that achieve IA, the signal sub-space at Rx-$i$ is linearly independent of the interference sub-space and $[H_{ii}V_i]$ is full-rank, the above condition is satisfied for all $i$. Hence, IA precoders are also CCSC optimal precoders. However, in general, finding such precoders are NP-hard \cite{RSL} whereas finding CCSC optimal precoders are easy to find as explained in the previous remark.
\end{remark}
 \begin{figure}[htbp]
\centering
\includegraphics[totalheight=3.1in,width=3.6in]{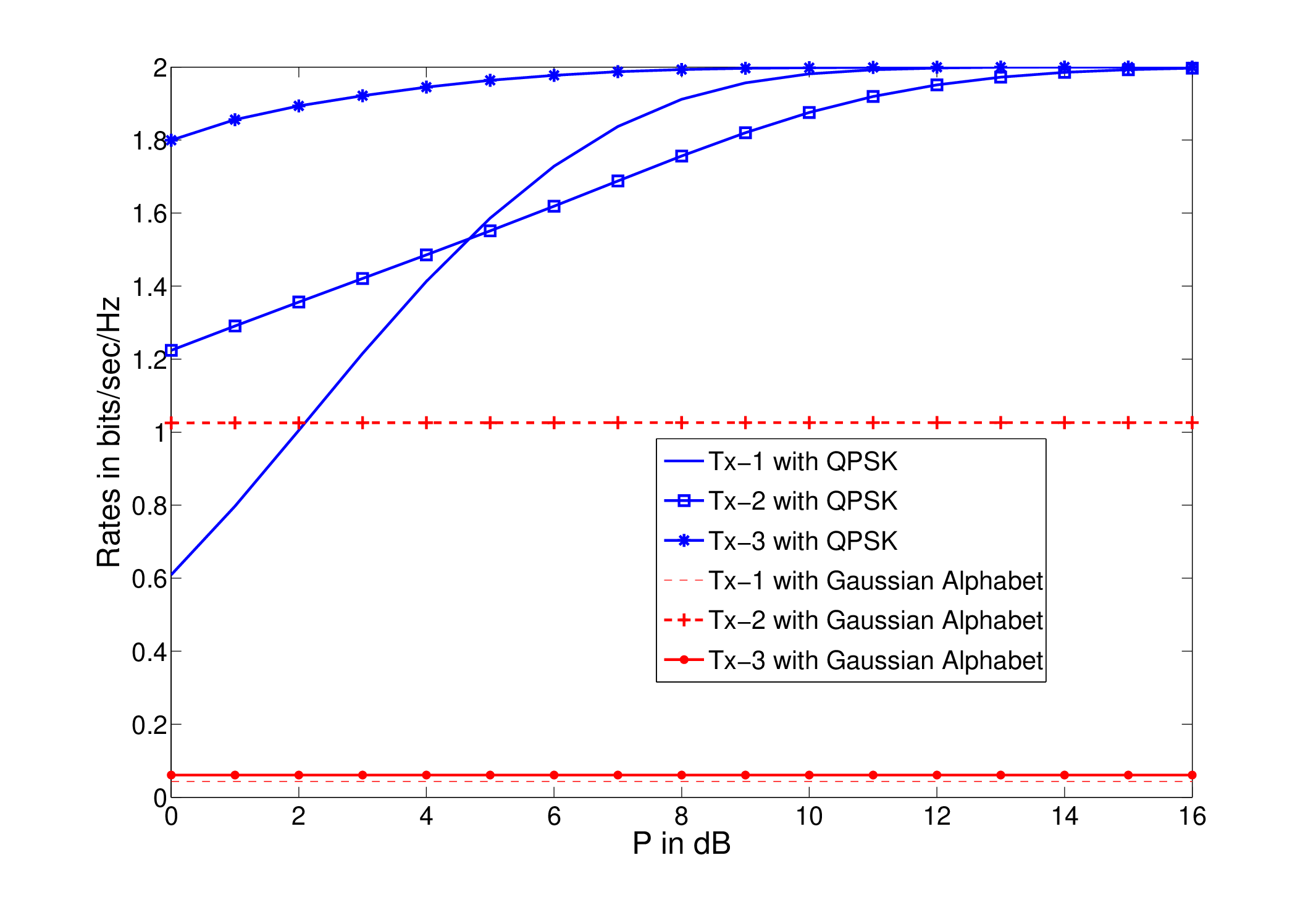}
\caption{Rates in bits/sec/Hz vs $P$ in dB for Example \ref{eg1}.}
\label{fig1}
\end{figure}

The following example illustrates a $3$-user MIMO GIC which employs CCSC optimal precoders.
\begin{example} \label{eg1}
Consider a MIMO GIC with $K=3$, $n_{t_i}=n_{r_i}=1$, $d_i=1$ for all $i$, and the finite constellation used is QPSK. The channel matrix and the precoders are given by
\begin{align*}
&H= \begin{bmatrix}
    -0.9+0.4i &  -1.7-1.40i &   1.5 + 5.0i\\
    2.6-0.9i  &  -0.9-2.8i  & 0.04 + 0.88i\\
   -2.9-5.2i  &  -10.2+0.7i & -0.5 + 2.4i]
    \end{bmatrix},\\
& V_1=1, V_2=e^{i\frac{\pi}{3}}, V_3=1
\end{align*}where, the matrix element $[H]_{ij}$ represents the channel gain from Rx-$i$ to Tx-$j$. The mutual information $I[X_i;Y_i]$ evaluated using Monte-Carlo simulation is plotted for QPSK inputs and Gaussian inputs, for all $i$, in Fig. \ref{fig1}. The chosen precoders satisfy (\ref{CCSC-optimality}) and hence, $I[X_i;Y_i]$ saturates to $2$ bits/sec/Hz for all $i$, as $P$ tends to infinity in the QPSK case whereas, for the Gaussian alphabet case, the saturation rate is determined by the channel gains. The saturation value of $I[X_i;Y_i]$ in the Gaussian alphabet case is given by $log\left(1+\frac{|h_{ii}|^2}{\sum_{k\neq i}^{}|h_{ki}|^2}\right)$ which evaluates to $0.04$, $1.02$, and $0.06$ bits/sec/Hz for Tx-$1$, Tx-$2$ and Tx-$3$ respectively in this example.
\end{example}
\begin{figure}[htbp]
\centering
\includegraphics[totalheight=3.1in,width=3.6in]{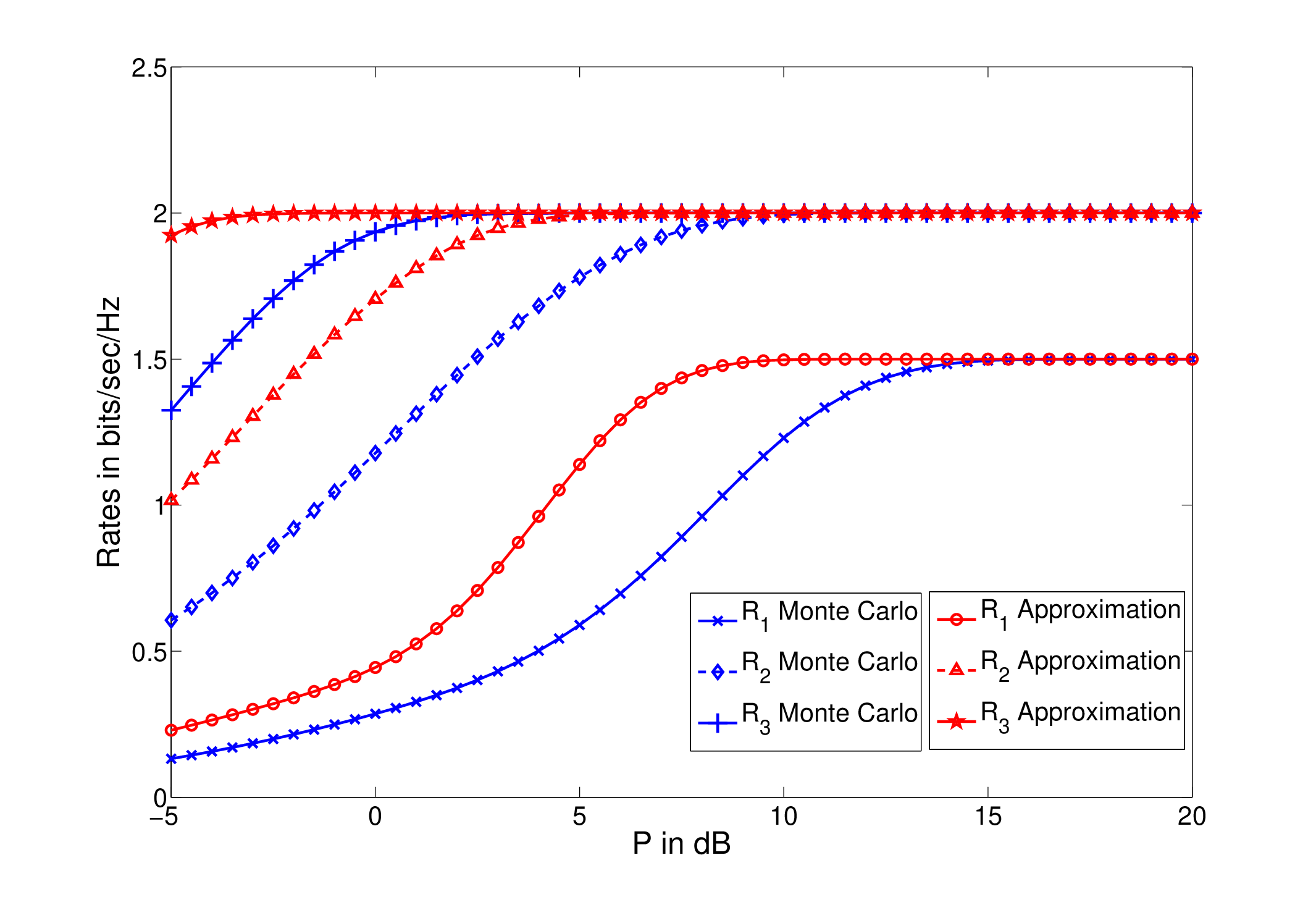}
\caption{Rates in bits/sec/Hz vs $P$ in dB for Example \ref{eg1a}.}
\label{fig1a}
\end{figure}

The following example illustrates a naive choice of precoders that are not CCSC optimal in a $3$-user MIMO GIC.
\begin{example} \label{eg1a}
Consider a MIMO GIC with $K=3$, $n_{t_i}=n_{r_i}=2$, $d_i=1$ for all $i$, and the finite constellation used is QPSK. Let the effective matrix from all the transmitters to all the receivers be given by
\begin{align*}
H=\begin{bmatrix}
    H_{11} & H_{21} & H_{31}\\
    H_{12} & H_{22} & H_{32}\\
    H_{13} & H_{23} & H_{33}
  \end{bmatrix}
\end{align*}where, $H_{ij}$ is the $2\times 2$ channel matrix from Tx-$i$ to Rx-$j$. The precoders and the channel matrices are given in (\ref{eqn-prec}) and (\ref{eqn-ch}) respectively (given at the top of the next page).
\begin{figure*}
\scriptsize
\begin{align}
\label{eqn-prec}
&V_1= \begin{bmatrix}
	0.66 + 0.74i& \\
	0.13 + 0.99i&
     \end{bmatrix}, 
V_2= \begin{bmatrix}
	0.9883 + 0.1524i& \\
	0.4538 + 0.8911i&
     \end{bmatrix}, 
V_3=\begin{bmatrix}
	0.7044 + 0.7098i& \\
	0.1603 + 0.9871i&
     \end{bmatrix}, \\
\label{eqn-ch}
&H= \begin{bmatrix}
   0.5756 - 0.0565i &  0.7524 - 0.1375i &  0.1697 - 0.1069i &  0.0124 - 0.2002i  &        0         &	     0 	\\
   0.1610 + 0.3766i & -0.0010 + 0.2005i &  0.8758 - 0.0689i & -0.1285 + 0.0605i  &        0         &        0   \\
  -1.1533 - 0.1280i & -0.6361 + 1.4658i & -1.3069 + 0.1090i &  0.0427 + 0.2488i & -0.0028 + 0.2215i & -1.0597 - 0.2708i\\
  -1.7763 - 0.3748i &  0.5341 + 0.0966i & -0.9491 + 0.8074i & -1.0773 - 1.7202i &  0.9616 - 1.2130i & -0.6077 + 0.6970i\\
  -1.7082 - 0.4948i & -0.6101 - 0.4739i & -0.2226 - 4.2486i & -0.8216 + 0.4808i &  0.9572 + 1.8870i & -1.4428 - 1.4353i\\
  -1.3014 - 0.5614i &  1.2515 + 0.3414i &  0.4242 + 0.0202i &  0.0138 - 0.8740i &  0.3393 - 1.3451i &  0.9498 - 1.0932i
    \end{bmatrix},
\end{align}
\hrule
\end{figure*}
Note that the QPSK points are given by  $\left(\frac{1}{\sqrt 2} + \frac{1}{\sqrt 2}i, -\frac{1}{\sqrt 2}+\frac{1}{\sqrt 2}i,  -\frac{1}{\sqrt 2}-\frac{1}{\sqrt 2}i, \frac{1}{\sqrt 2}-\frac{1}{\sqrt 2}i \right)$. Now, (\ref{CCSC-optimality}) is not satisfied for $i=1$ because 

{\footnotesize
\begin{align*}
&[H_{11}V_1 ~H_{21}V_2 ~H_{31}V_3] \times\begin{bmatrix}
&(x_1^{p_{11}}-x_1^{p_{12}})\\
&(x_{2}^{p_{21}}-x_{2}^{p_{22}})\\
&(x_{3}^{p_{31}}-x_{3}^{p_{32}})
\end{bmatrix}=\underline{0}, ~~\text{for}\\
&\begin{bmatrix}
&(x_1^{p_{11}}-x_1^{p_{12}})\\
&(x_{2}^{p_{21}}-x_{2}^{p_{22}})\\
&(x_{3}^{p_{31}}-x_{3}^{p_{32}})
\end{bmatrix}=
\begin{bmatrix}
\sqrt 2+\sqrt 2i\\
\sqrt 2\\
0
\end{bmatrix}.
\end{align*}}For $i=2,3$, (\ref{CCSC-optimality}) is satisfied. The plots of $I[X_i; Y_i]$ evaluated using Monte-Carlo simulation and $I[X_i; Y_i]$ evaluated using the high SNR approximation in (\ref{actual_approx}) is shown in Fig. \ref{fig1a}. Note that $R_1$ saturates to a value strictly less than $2$ bits/sec/Hz whereas $R_2$ and $R_3$ saturates to $2$ bits/sec/Hz, thus validating Theorem \ref{thm2}.
\end{example}

The intuition behind the result of Theorem \ref{thm2} is as follows. Define the sum constellation at receiver Rx-$i$ to be the set of points given by 
\begin{align*}
\left\{\sqrt{P}\left(H_{ii}V_i X_i+\sum_{\substack{{k \neq i}\\{k=1}}}^{K}H_{kj}V_k X_k\right)  \mid  X_j \in {\cal S}, ~\forall ~j \right\}.
\end{align*}

At every receiver Rx-$i$, the interference forms a ``cloud'' around the desired signal points in the sum constellation. Cloud around a desired signal point $x_i^{p_{i}}$, where $x_i^{p_{i}} \in {\cal S}$, is defined as the set of points given by
\begin{align*}
\left\{\sqrt{P}\left(H_{ii}V_ix_i^{p_{i}}+\sum_{\substack{{k \neq i}\\{k=1}}}^{K}H_{kj}V_k X_k\right) \mid X_k \in {\cal S}\right\}.
\end{align*}Note that the information regarding a desired signal point $x_i^{p_{i}}$ is contained in its respective cloud in the sum constellation. At high values of $P$, the clouds corresponding to the different signal points move away from each other if there is no intersection among the clouds. Since it is enough for each receiver to distinguish between the clouds and is not required to distinguish the points inside every cloud, every signal point can be reliably decoded if the clouds do not intersect. The sum constellation and the non-intersecting clouds at Rx-$1$, at $P=16~dB$, for Example \ref{eg1} is plotted in Fig. \ref{fig-sum_const}. 
\begin{figure}[htbp]
\centering 
\includegraphics[totalheight=2.5in,width=3.6in]{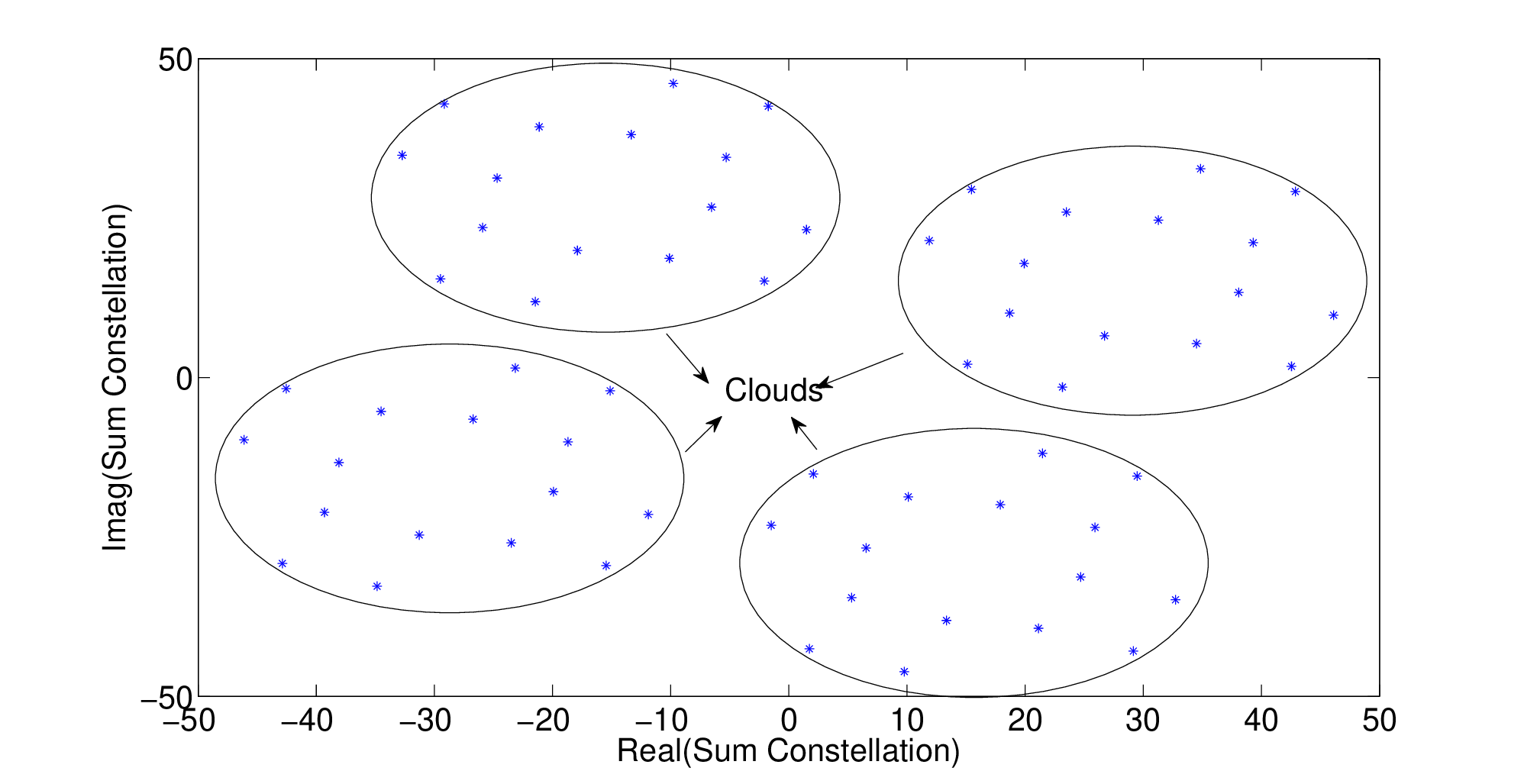}
\caption{Cloud Constellation at Rx-$1$ for Example \ref{eg1} at $P=16~dB$.}
 \label{fig-sum_const}
\end{figure} 

\section{Gradient Ascent Based Algorithm for Finite-SNR} \label{sec4}
In the previous section, we studied the rate achieved for each transmitter by treating interference as noise at every receiver as $P$ tends to infinity. In this section, we focus on the finite SNR case. Specifically, the aim is to maximize the sum-rate achieved by treating interference as noise at every receiver with respect to the precoders, i.e.,

{\small
\begin{align*}
 \max f(V_1,\cdots,V_K)= \max \sum_{i=1}^{K}I[X_i;Y_i]  \text{ with } Tr (V_iV_i^H) \leq 1.
\end{align*}}This is a non-concave problem in general and  difficult to solve. Hence, we propose a gradient-ascent based algorithm to improve the sum-rate starting from some random initialization of precoders. Define the MMSE matrix at Rx-$j$ by
\begin{align*}
  E_j = \mathbb E[(X-\mathbb E[X|Y_j])(X-\mathbb E[X|Y_j])^H]
\end{align*}where, $\mathbb E$ represents the expectation operator. Define the MMSE matrix at Rx-$j$ with the exclusion of Tx-$j$'s signal by

{\footnotesize
\begin{align*}
  E_{{\not{\hspace{0.02cm}j}}} = \mathbb E[(X_{\not{\hspace{0.02cm}j}}-\mathbb E[X_{\not{\hspace{0.02cm}j}}|Y_j-H_{jj}X_j])(X_{\not{\hspace{0.02cm}j}}-\mathbb E[X_{\not{\hspace{0.02cm}j}}|Y_j-H_{jj}X_j])^H]
\end{align*}}The gradient of the sum-rate with respect to the precoder $V_i$ given by 

{\small
\begin{align}
\nonumber
&\nabla_{V_i} f(V_1,\cdots,V_K)= \nabla_{V_i}  \sum_{j=1}^{K}I[X_j;Y_j]\\
\nonumber
&=\nabla_{V_i}  \sum_{j=1}^{K}I[X_1,X_2,\cdots,X_K;Y_j]-I[X_1,X_2,\cdots,X_K;Y_j|X_j]\\
\label{eqn-grad2}
&=log~e\sum_{j=1}^{K} H^{H}_{ij}H_jE_j~I_{\sum_{k=1}^{i-1}d_k+1:\sum_{k=1}^{i}d_k}\\
\nonumber
&-log~e\sum_{\substack{{j=1}\\{j\neq i}}}^{i=K} H^{H}_{ij}H_{\not{\hspace{0.02cm}j}}E_{\not{\hspace{0.02cm}j}}~I_{\sum_{k=1}^{i-1}d_k-d_j{\cal I}(i-j)+1:\sum_{k=1}^{i}d_k-d_j{\cal I}(i-j)}
\end{align}}where, (\ref{eqn-grad2}) follows from the relation between the gradient of mutual information and the MMSE matrix obtained in \cite{PaV}. The matrices 

{\small \vspace{-0.3cm}
\begin{align*} 
&I_{\sum_{k=1}^{i-1}d_k+1:\sum_{k=1}^{i}d_k} \text{ and }\\
&I_{\sum_{k=1}^{i-1}d_k-d_j{\cal I}(i-j)+1:\sum_{k=1}^{i}d_k-d_j{\cal I}(i-j)} 
\end{align*}}select the column numbers from {\small$\sum_{k=1}^{i-1}d_k+1$} to {\small$\sum_{k=1}^{i}d_k$} of $E_j$ and {\small$\sum_{k=1}^{i-1}d_k-d_j{\cal I}(i-j)+1$} to {\small$\sum_{k=1}^{i}d_k-d_j{\cal I}(i-j)$} of $E_{\not{\hspace{0.02cm}j}}$ respectively, where

{\small \[
{\cal I}(i-j)=\left\{
\begin{array}{ll}
1& ~i>j \\
0 & ~i<j.
\end{array}
\right.
\]}Define $V=diag(V_1,V_2,\cdots,V_K)$. The gradient ascent based algorithm for optimizing $f(V_1,\cdots,V_K)$ with respect to the precoders is given in Algorithm $1$. During every iteration, whose number is denoted by $n$, all the precoders are updated as given in Line $10$ of Algorithm $1$ where, $\nabla_V f|_{V=V^{(n-1)}}$ represents {\small$diag\left(\nabla_{V_1} f|_{V_1=V_1^{(n-1)}},\nabla_{V_2} f|_{V_2=V_2^{(n-1)}},\cdots,\nabla_{V_K} f|_{V_K=V_K^{(n-1)}}\right)$} and {\small$\nabla_{V_i} f|_{V_i=V_i^{(n-1)}}$} denotes the gradient {\small$\nabla_{V_i} f$} evaluated at {\small$V_i=V_i^{(n-1)}$}. If the power constraint for any transmitter Tx-$i$ is violated then, $V_i^{(n)}$ is projected onto the feasible set with {\small$Tr(V_i^{(n)}{V_i^{(n)}}^H)\leq 1$} (see Line $12$ of Algorithm $1$) \cite{PaV}. The condition in Line $15$ of the algorithm ensures that there is sufficient increase in the objective function. The step size $t$ of the algorithm is chosen by back-tracking line search with parameters $\alpha$ and $\beta$ whose typical values lie between $(0.01,0.3)$ and $(0.1,0.8)$\cite{BoV}. The proposed algorithm stops when either the number of iterations performed is equal to $max\_iterations$ or $f^{(n-1)}-f^{(n-2)}<\epsilon$ (see Line $5$ of Algorithm $1$), for some fixed $\epsilon$.

\begin{algorithm}
 \label{algo1}
 \footnotesize
 \begin{algorithmic}[1]
 \State Initialize $V_i=V_i^{(0)}$ with $Tr(V_iV_i^H)\leq 1$, $i=1,2,\cdots, K$, and $t=1$.
 \For {$n=1$ to $\text{\em max\_iterations}$}
     \State \textbf{Compute} $f^{(n-1)}=f(V^{(n-1)}_1,V^{(n-1)}_2,\cdots,V^{(n-1)}_K)$, \State$~~~~~~~~~~~$ $E^{(n-1)}_j$, and $E^{(n-1)}_{\not{\hspace{0.02cm}j}}$, for $j=1,2,\cdots,K$.
      \If{$n>1$ \text{and} $f^{(n-1)}-f^{(n-2)}<\epsilon$}{\\ ~~~~~~~exit \textbf{for}}
      \EndIf
    \State Compute $\nabla_V f|_{V=V^{(n-1)}}$
    \State \textbf{do}
    \State $~~$ $V^{(n)} \leftarrow V^{(n-1)}+t\nabla_V f|_{V=V^{(n-1)}}$.\\
    \State $~~$ $V_i^{(n)}\leftarrow \frac{V_i^{(n)}}{Tr(V_i^{(n)}{V_i^{(n)}}^H)}$ if $Tr(V_i^{(n)}{V_i^{(n)}}^H)>1$, for all $i$.
    \State $~~$ Compute $f^{(n)}=f(V^{(n)}_1,V^{(n)}_2,\cdots,V^{(n)}_K)$.
    \State $~~$ $t=\beta t$
    \State \textbf{while} $f^{(n)} < f^{(n-1)}+\alpha t \left|\left|\nabla_V f|_{V=V^{(n-1)}}\right|\right|_F^2$
    \State $t=1$.
\EndFor
 \end{algorithmic}
 \caption{Gradient Ascent based Algorithm for improving sum-rate}
\end{algorithm}
Similar gradient ascent based algorithms have been proposed in the past for optimizing rates in single user MIMO channels \cite{PaV,PRS} and MIMO MAC \cite{WZX} with finite constellation inputs and precoding. Like in \cite{PaV}\cite{PRS}\cite{WZX}, the algorithm does not assume uniform distribution over the elements of the finite constellation. The above algorithm appeared first in an older \textit{arxiv} version of this paper \cite{AbR2}. Recently, the same gradient ascent algorithm as above appeared in \cite{WXGMD} with weighted sum-rate as the objective function instead of sum-rate as the objective (as considered here).

Note that evaluation of $\nabla_V f$ (in Step $8$ of Algorithm $1$) and the function $f$ (in Step $13$ of Algorithm $1$) have high complexity because they require averaging over an arbitrary number of Gaussian noise samples. This motivates us to pursue low complexity gradient ascent based algorithms in the following subsection.

\subsection{Low Complexity Gradient Ascent Algorithms based on high SNR approximation} \label{subsec4a}
In this subsection, we assume uniform distribution over the elements of the finite constellation. Since the high SNR approximation in (\ref{actual_approx}) does not involve averaging over noise samples, we propose to maximize the objective function defined by
\begin{align} \nonumber
& f_1\left(V_1, \cdots, V_K\right)=\sum_{i=1}^{3} log~M^{d_i} \\
\label{approx1_obj}
&-\frac{1}{M^{\sum_{k=1}^{K}d_k}}\hspace{-0.2cm}\sum_{k_1 = 0}^{M^{\sum_{k=1}^{K}d_k}-1}\left[ log  \left( \sum_{k_2 = 0}^{M^{\sum_{k=1}^{K}d_k}-1}e^{ - \left|\left|\sqrt P A^{k_1,k_2}_i\right|\right|^2 }\right) \right]\\
\nonumber
&+\frac{1}{ M^{\sum_{j \neq i}d_j}}\hspace{-0.2cm}\sum_{i_1 = 0}^{ M^{\sum_{j \neq i}d_j}-1}\left[ log  \left( \sum_{i_2 = 0}^{ M^{\sum_{j \neq i}d_j}-1}e^{ - \left|\left|\sqrt P B^{i_1,i_2}_i\right|\right|^2 }\right) \right].
\end{align}using Algorithm $1$ where, $f$ is replaced by $f_1$ and $\nabla_V f$ is replaced by  $\nabla_V f_1$.

We now interpret what maximizing $f_1$ and $f$ means in terms of the input constellations, at large values of $P$. The following observation would generalize the result of \cite{PRS} that the precoding matrix $V_i$ that maximizes $I[X_i;Y_i]$ when the interference channel matrices are zero converges to the matrix that maximizes the minimum distance between the desired constellation vectors, at large values of $P$.

\begin{theorem} \label{thm3}
At large values of $P$, the precoding matrices that maximize $f(V_1,\cdots,V_K)$ and $f_1(V_1,\cdots,V_K)$ converge to the matrices that maximize $\min_{i=1}^{K}d_{min}(i)$ where, $d_{min}(i)$ is given by

{\footnotesize \vspace{-0.2cm}
\begin{align} \label{eqn-define_dmin(i)}
 \min_{\substack{{p_{i1}\neq p_{i2},}\\{ \left(p_{k1}, p_{k2}\right)}}}\left|\left|H_{ii}V_i(x_i^{p_{i1}}-x_i^{p_{i2}})+\sum_{\substack{{k \neq i}\\{k=1}}}^{K}H_{kj}V_k(x_k^{p_{k1}}-x_k^{p_{k2}})\right|\right|
\end{align}}represents the minimum among the distances between two points belonging to different clouds at Rx-$i$ (without the power scaling $P$).
\end{theorem}
\begin{proof}
The high SNR approximation for $I[X_i;Y_i]$ in (\ref{actual_approx}) can be re-written as (\ref{eqn-dmin_1}) using (\ref{eqn-use_in_dmin_1}) and (\ref{eqn-use_in_dmin_2}). Using first order expansion for logarithm function, at large $P$, we have the approximation in (\ref{eqn-dmin_2}). At large $P$, the exponential terms in (\ref{eqn-dmin_2}) are negligible compared to the other terms. Hence, the precoders that maximize $\sum_{i=1}^{K}I[X_i;Y_i]$ at large $P$ must have $\left|{\cal C}_{2i}^{p_{11} ,\cdots,p_{K1}}\right|=0$, for all $\left(p_{11},\cdots , p_{K1}\right)$ and $i$, i.e., the precoders must be CCSC optimal. Using the definitions of $A^{k_1,k_2}_i$ and $B^{k_1,k_2}_i$ given in (\ref{eqn-define_A}) and (\ref{eqn-define_B}) respectively, and the definition of ${\cal D}^{p_{11},\cdots,p_{i-1,1},p_{i+1,1},\cdots, p_{K1}}_i$ in (\ref{eqn-define_calD}), we have (\ref{eqn-dmin_3}). Splitting the second term of (\ref{eqn-dmin_3}) into two terms, one with $p_{i,2}=p_{i,1}$ and another with $p_{i,2} \neq p_{
i,1}$, we have (\ref{eqn-dmin_4}). Now, the third and the last term of (\ref{eqn-dmin_4}) are equal and thus, they cancel each other. Hence, we have (\ref{eqn-dmin_5}). At high $P$, the exponential term corresponding to $d_{min}(i)$, defined in (\ref{eqn-define_dmin(i)}), dominates the value of (\ref{eqn-dmin_5}). Suppose the number of tuples $(p_{i1}, \cdots, p_{K1}, p_{i2}, \cdots, p_{K2})$ contributing to $d_{min}(i)$ in the exponent term of (\ref{eqn-dmin_5}) be $t_i$ and let the corresponding values of $\left|{\cal D}^{p_{11},\cdots,p_{i-1,1},p_{i+1,1},\cdots, p_{K1}}_i\right|$ be denoted by $D_{ik}$, for $k=1,2,\cdots,t_i$. Therefore, we have
\begin{align*}
 \sum_{i=1}^{3}I[X_i;Y_i] \approx \sum_{i=1}^{3}~log~M^{d_i} - \frac{e^{-Pd^2_{min}(i)} }{M^{\sum_{k=1}^{K}d_k}}\left(\sum_{k=1}^{t_i}\frac{log ~e}{1+D_{ik}}\right)
\end{align*}Again, at large $P$, the minimum among $d^2_{min}(i)$ dominates the value of the above expression. Hence, the precoding matrices that maximize $\sum_{i=1}^{3}I[X_i;Y_i]$ converge to the matrices that maximize $\min_{i=1}^{K}d_{min}(i)$, at large values of $P$.
\begin{figure*} \scriptsize
\begin{align}
 \nonumber
& I[X_i;Y_i] \approx log~M^{d_i} - \frac{1}{M^{\sum_{k=1}^{K}d_k}}\sum_{k_1 = 0}^{M^{\sum_{k=1}^{K}d_k}-1} log  \left(1+\frac{1}{1+|{\cal A}^{k_1}_i|}\sum_{\substack{{k_2 \neq k_1}\\{k_2 \notin {\cal A}^{k_1}_i}}} e^{ -P\left( \left|\left| A^{k_1,k_2}_i\right|\right|^2\right)} \right)\\
 \nonumber
&-\frac{1}{M^{\sum_{k=1}^{K}d_k}}\sum_{p_{11} = 0}^{M^{d_1}-1} \cdots \sum_{p_{i-1,1} = 0}^{M^{d_{i-1}}-1}\sum_{p_{i,1} = 0}^{M^{d_{i}}-1}\sum_{p_{i+1,1} = 0}^{M^{d_{i+1}}-1} \cdots \sum_{p_{K1} = 0}^{M^{d_K}-1}log  \left(1+|{\cal D}^{p_{11},\cdots,p_{i-1,1},p_{i+1,1},\cdots, p_{K1}}_i|+|{\cal C}^{p_{11},\cdots,p_{K1}}_{2i}| \right)\\
 \nonumber
&+\frac{1}{ M^{\sum_{j \neq i}d_j}}\sum_{p_{11} = 0}^{M^{d_1}-1} \cdots \sum_{p_{i-1,1} = 0}^{M^{d_{i-1}}-1}\sum_{p_{i+1,1} = 0}^{M^{d_{i+1}}-1}\cdots \sum_{p_{K1} = 0}^{M^{d_K}-1} log  \left(1+|{\cal D}^{p_{11},\cdots,p_{i-1,1},p_{i+1,1},\cdots, p_{K1}}_i| \right)\\
\label{eqn-dmin_1}
&+ \frac{1}{M^{\sum_{j \neq i}d_j}}\sum_{i_1 = 0}^{M^{\sum_{j \neq i}d_j}-1} log  \left(1+\frac{1}{1+|{\cal B}^{i_1}_i|}\sum_{\substack{{i_2 \neq i_1}\\{i_2 \notin {\cal B}^{i_1}_i}}} e^{ -P\left( \left|\left| B^{i_1,i_2}_i\right|\right|^2\right)} \right)\\ 
\nonumber
\approx & ~log~M^{d_i} - \frac{1}{M^{\sum_{i=1}^{K}d_i}}\sum_{k_1 = 0}^{M^{\sum_{i=1}^{K}d_i}-1} \frac{log ~e}{1+|{\cal A}^{k_1}_i|}\sum_{\substack{{k_2 \neq k_1}\\{k_2 \notin {\cal A}^{k_1}_i}}} e^{ -P\left( \left|\left| A^{k_1,k_2}_i\right|\right|^2\right)} \\
\nonumber
&-\frac{1}{M^{\sum_{k=1}^{K}d_k}}\sum_{p_{11} = 0}^{M^{d_1}-1} \cdots \sum_{p_{i-1,1} = 0}^{M^{d_{i-1}}-1}\sum_{p_{i,1} = 0}^{M^{d_{i}}-1}\sum_{p_{i+1,1} = 0}^{M^{d_{i+1}}-1} \cdots \sum_{p_{K1} = 0}^{M^{d_K}-1}log  \left(1+|{\cal D}^{p_{11},\cdots,p_{i-1,1},p_{i+1,1},\cdots, p_{K1}}_i|+|{\cal C}^{p_{11},\cdots,p_{K1}}_{2i}| \right)\\
\nonumber
&+\frac{1}{ M^{\sum_{j \neq i}d_j}}\sum_{p_{11} = 0}^{M^{d_1}-1} \cdots \sum_{p_{i-1,1} = 0}^{M^{d_{i-1}}-1}\sum_{p_{i+1,1} = 0}^{M^{d_{i+1}}-1}\cdots \sum_{p_{K1} = 0}^{M^{d_K}-1} log  \left(1+|{\cal D}^{p_{11},\cdots,p_{i-1,1},p_{i+1,1},\cdots, p_{K1}}_i| \right)\\
\label{eqn-dmin_2}
&+  \frac{1}{M^{\sum_{j \neq i}d_j}}\sum_{i_1 = 0}^{M^{\sum_{j \neq i}d_j}-1} \frac{log ~e}{1+|{\cal B}^{i_1}_i|}\sum_{\substack{{i_2 \neq i_1}\\{i_2 \notin {\cal B}^{i_1}_i}}} e^{ -P\left( \left|\left| B^{i_1,i_2}_i\right|\right|^2\right)}\\ 
\nonumber
=&~log~M^{d_i} -\frac{1}{M^{\sum_{k=1}^{K}d_k}}\sum_{p_{11} = 0}^{M^{d_1}-1} \cdots \sum_{p_{i-1,1} = 0}^{M^{d_{i-1}}-1}\sum_{p_{i,1} = 0}^{M^{d_{i}}-1}\sum_{p_{i+1,1} = 0}^{M^{d_{i+1}}-1} \cdots \sum_{p_{K1} = 0}^{M^{d_K}-1}\frac{log ~e}{1+|{\cal D}^{p_{11},\cdots,p_{i-1,1},p_{i+1,1},\cdots, p_{K1}}|}\\
\nonumber
&\hspace{3cm}\sum_{p_{12} = 0}^{M^{d_1}-1} \cdots \sum_{p_{i-1,2} = 0}^{M^{d_{i-1}}-1}\sum_{p_{i,2} = 0}^{M^{d_{i}}-1}\sum_{p_{i+1,2} = 0}^{M^{d_{i+1}}-1} \cdots \sum_{p_{K2} = 0}^{M^{d_K}-1} e^{-P\left|\left| H_{ii}V_i(x_i^{p_{i1}}-x_i^{p_{i2}})+\sum_{\substack{{k \neq i}\\{k=1}}}^{K}H_{kj}V_k(x_k^{p_{k1}}-x_k^{p_{k2}}) \right| \right|^2}\\
\nonumber
&+\frac{1}{M^{\sum_{j\neq i}d_j}}\sum_{p_{11} = 0}^{M^{d_1}-1} \cdots \sum_{p_{i-1,1} = 0}^{M^{d_{i-1}}-1}\sum_{p_{i+1,1} = 0}^{M^{d_{i+1}}-1} \cdots \sum_{p_{K1} = 0}^{M^{d_K}-1}\frac{log ~e}{1+|{\cal D}^{p_{11},\cdots,p_{i-1,1},p_{i+1,1},\cdots, p_{K1}}_i|}\\
\label{eqn-dmin_3}
&\hspace{3cm}\sum_{p_{12} = 0}^{M^{d_1}-1} \cdots \sum_{p_{i-1,2} = 0}^{M^{d_{i-1}}-1}\sum_{p_{i+1,2} = 0}^{M^{d_{i+1}}-1} \cdots \sum_{p_{K2} = 0}^{M^{d_K}-1} e^{-P\left|\left| \sum_{\substack{{k=1}\\{k \neq i}}}^{K}H_{kj}V_k(x_k^{p_{k1}}-x_k^{p_{k2}}) \right| \right|^2}\\
\nonumber
=&~log~M^{d_i} -\frac{1}{M^{\sum_{k=1}^{K}d_k}}\sum_{p_{11} = 0}^{M^{d_1}-1} \cdots \sum_{p_{i-1,1} = 0}^{M^{d_{i-1}}-1}\sum_{p_{i,1} = 0}^{M^{d_{i}}-1}\sum_{p_{i+1,1} = 0}^{M^{d_{i+1}}-1} \cdots \sum_{p_{K1} = 0}^{M^{d_K}-1}\frac{log ~e}{1+|{\cal D}^{p_{11},\cdots,p_{i-1,1},p_{i+1,1},\cdots, p_{K1}}_i|}\\
\nonumber
&\hspace{3cm}\sum_{p_{12} = 0}^{M^{d_1}-1} \cdots \sum_{p_{i-1,2} = 0}^{M^{d_{i-1}}-1}\sum_{\substack{{p_{i,2} = 0}\\{p_{i,2}\neq p_{i,1}}}}^{M^{d_{i}}-1}\sum_{p_{i+1,2} = 0}^{M^{d_{i+1}}-1} \cdots \sum_{p_{K2} = 0}^{M^{d_K}-1} e^{-P\left|\left| H_{ii}V_i(x_i^{p_{i1}}-x_i^{p_{i2}})+\sum_{\substack{{k \neq i}\\{k=1}}}^{K}H_{kj}V_k(x_k^{p_{k1}}-x_k^{p_{k2}}) \right| \right|^2}\\
\nonumber
&-\frac{1}{M^{\sum_{k=1}^{K}d_k}}\sum_{p_{11} = 0}^{M^{d_1}-1} \cdots \sum_{p_{i-1,1} = 0}^{M^{d_{i-1}}-1}\sum_{p_{i,1} = 0}^{M^{d_{i}}-1}\sum_{p_{i+1,1} = 0}^{M^{d_{i+1}}-1} \cdots \sum_{p_{K1} = 0}^{M^{d_K}-1}\frac{log ~e}{1+|{\cal D}^{p_{11},\cdots,p_{i-1,1},p_{i+1,1},\cdots, p_{K1}}_i|}\\
\nonumber
&\hspace{3cm}\sum_{p_{12} = 0}^{M^{d_1}-1} \cdots \sum_{p_{i-1,2} = 0}^{M^{d_{i-1}}-1}\sum_{p_{i,2} = p_{i,1}}\sum_{p_{i+1,2} = 0}^{M^{d_{i+1}}-1} \cdots \sum_{p_{K2} = 0}^{M^{d_K}-1} e^{-P\left|\left| H_{ii}V_i(x_i^{p_{i1}}-x_i^{p_{i2}})+\sum_{\substack{{k \neq i}\\{k=1}}}^{K}H_{kj}V_k(x_k^{p_{k1}}-x_k^{p_{k2}}) \right| \right|^2}\\
\nonumber
&+\frac{1}{M^{\sum_{j\neq i}d_j}}\sum_{p_{11} = 0}^{M^{d_1}-1} \cdots \sum_{p_{i-1,1} = 0}^{M^{d_{i-1}}-1}\sum_{p_{i+1,1} = 0}^{M^{d_{i+1}}-1} \cdots \sum_{p_{K1} = 0}^{M^{d_K}-1}\frac{log ~e}{1+|{\cal D}^{p_{11},\cdots,p_{i-1,1},p_{i+1,1},\cdots, p_{K1}}|}\\
\label{eqn-dmin_4}
&\hspace{3cm}\sum_{p_{12} = 0}^{M^{d_1}-1} \cdots \sum_{p_{i-1,2} = 0}^{M^{d_{i-1}}-1}\sum_{p_{i+1,2} = 0}^{M^{d_{i+1}}-1} \cdots \sum_{p_{K2} = 0}^{M^{d_K}-1} e^{-P\left|\left| \sum_{\substack{{k=1}\\{k \neq i}}}^{K}H_{kj}V_k(x_k^{p_{k1}}-x_k^{p_{k2}}) \right| \right|^2}\\
\nonumber
=&~log~M^{d_i} -\frac{1}{M^{\sum_{k=1}^{K}d_k}}\sum_{p_{11} = 0}^{M^{d_1}-1} \cdots \sum_{p_{i-1,1} = 0}^{M^{d_{i-1}}-1}\sum_{p_{i,1} = 0}^{M^{d_{i}}-1}\sum_{p_{i+1,1} = 0}^{M^{d_{i+1}}-1} \cdots \sum_{p_{K1} = 0}^{M^{d_K}-1}\frac{log ~e}{1+|{\cal D}^{p_{11},\cdots,p_{i-1,1},p_{i+1,1},\cdots, p_{K1}}_i|}\\
\label{eqn-dmin_5}
&\hspace{3cm}\sum_{p_{12} = 0}^{M^{d_1}-1} \cdots \sum_{p_{i-1,2} = 0}^{M^{d_{i-1}}-1}\sum_{\substack{{p_{i,2} = 0}\\{p_{i,2}\neq p_{i,1}}}}^{M^{d_{i}}-1}\sum_{p_{i+1,2} = 0}^{M^{d_{i+1}}-1} \cdots \sum_{p_{K2} = 0}^{M^{d_K}-1} e^{-P\left|\left| H_{ii}V_i(x_i^{p_{i1}}-x_i^{p_{i2}})+\sum_{\substack{{k \neq i}\\{k=1}}}^{K}H_{kj}V_k(x_k^{p_{k1}}-x_k^{p_{k2}}) \right| \right|^2}
\end{align}
\hrule
\end{figure*}
\end{proof}

We observe that maximizing $\min_{i=1}^{K}d_{min}(i)$ also minimizes the probability of error with ML decoding across all the receivers, at large values of $P$.

Now, note that at large values of $P$, we can further approximate (\ref{actual_approx}) by 
\begin{align} \nonumber
&I\left[X_i;Y_i\right] \approx log~M^{d_i} \\
\label{approx1_obj}
&-\frac{1}{M^{\sum_{k=1}^{K}d_k}}\hspace{-0.2cm}\sum_{k_1 = 0}^{M^{\sum_{k=1}^{K}d_k}-1}\left[ log  \left( \sum_{k_2 = 0}^{M^{\sum_{k=1}^{K}d_k}-1}e^{ - \frac{\left|\left|\sqrt P A^{k_1,k_2}_i\right|\right|^2}{r} }\right) \right]\\
\nonumber
&+\frac{1}{ M^{\sum_{j \neq i}d_j}}\hspace{-0.2cm}\sum_{i_1 = 0}^{ M^{\sum_{j \neq i}d_j}-1}\left[ log  \left( \sum_{i_2 = 0}^{ M^{\sum_{j \neq i}d_j}-1}e^{ - \frac{\left|\left|\sqrt P B^{i_1,i_2}_i\right|\right|^2}{r} }\right) \right].
\end{align}for some positive real number $r$. This is because the above expression and the expression in (\ref{actual_approx}) tend to the same value as $P$ tends to infinity. 

Thus, another low complexity algorithm shall involve maximization of the objective function given by 
\begin{align} \nonumber
& f_2\left(V_1, \cdots, V_K\right)= \sum_{i=1}^{3}log~M^{d_i} \\
\label{approx1_obj}
&-\frac{1}{M^{\sum_{k=1}^{K}d_k}}\hspace{-0.2cm}\sum_{k_1 = 0}^{M^{\sum_{k=1}^{K}d_k}-1}\left[ log  \left( \sum_{k_2 = 0}^{M^{\sum_{k=1}^{K}d_k}-1}e^{ - \frac{\left|\left|\sqrt P A^{k_1,k_2}_i\right|\right|^2}{r} }\right) \right]\\
\nonumber
&+\frac{1}{ M^{\sum_{j \neq i}d_j}}\hspace{-0.2cm}\sum_{i_1 = 0}^{ M^{\sum_{j \neq i}d_j}-1}\left[ log  \left( \sum_{i_2 = 0}^{ M^{\sum_{j \neq i}d_j}-1}e^{ - \frac{\left|\left|\sqrt P B^{i_1,i_2}_i\right|\right|^2}{r} }\right) \right].
\end{align}using Algorithm $1$ where, $f$ is replaced by $f_2$ and $\nabla_V f$ is replaced by  $\nabla_V f_2$. Following the chain rule for matrix differentials \cite{AH}, the gradient $\nabla_{V_k} f_2$, for $k=1,2,\cdots,K$, is given by (\ref{eqn-approx_grad}) (at the top of the page after the next page) where, $x_{k,k_1,k_2}$ is the sub-vector of the vector $A^{k_1,k_2}$ which corresponds to $x^{p_1}_k-x^{p_2}_k$, for some $p_1,p_2=0,1,\cdots,M^{d_k}-1$, and $x_{k,i_1,i_2}$ is the sub-vector of the vector $B^{i_1,i_2}$ which corresponds to $x^{p_1}_k-x^{p_2}_k$, for some $p_1,p_2=0,1,\cdots,M^{d_k}-1$.
\begin{figure*}
\begin{align} \nonumber
\nabla_{V_k} f_2~=~& \frac{1}{r}\sum_{i=1}^{K} \sum_{k_1=0}^{M^{\sum_{i'=1}^{K} d_{i'}}-1} \left(\frac{\sum_{k_2=0}^{M^{\sum_{i'=1}^{K} d_{i'}}-1} ~~e^{-\frac{\left|\left|A^{k_1,k_2}_i\right|\right|^2}{r}}H^H_{ki}A^{k_1,k_2}_ix^H_{k,k_1,k_2}}{\sum_{k_2=0}^{M^{\sum_{i'=1}^{K} d_{i'}}-1} ~~e^{-\frac{\left|\left|A^{k_1,k_2}_i\right|\right|^2}{r}}}\right) \\&-	
\label{eqn-approx_grad}
\frac{1}{r}\sum_{\substack{{j=1}\\{j\neq k}}}^{K} \sum_{i_1=0}^{M^{\sum_{j'=1}^{K} d_{j'}}-1} \left(\frac{\sum_{i_2=0}^{M^{\sum_{j'=1}^{K} d_{j'}}-1} ~~e^{-\frac{\left|\left|B^{i_1,i_2}_j\right|\right|^2}{r}}H^H_{kj}B^{i_1,i_2}_jx^H_{k,i_1,i_2}}{\sum_{i_2=0}^{M^{\sum_{j'=1}^{K} d_{j'}}-1} ~~e^{-\frac{\left|\left|B^{i_1,i_2}_j\right|\right|^2}{r}}}\right) 
\end{align}
\hrule
\end{figure*}

In the following subsection, we present some simulation results using the proposed algorithms.

\subsection{Simulation Results} \label{subsec4b}
Several algorithms to obtain precoders that aim to achieve IA are known. We consider two representatives of such algorithms from \cite{GCJ} and \cite{SPLL} for comparison with the proposed algorithms. The works in \cite{GCJ} and \cite{SPLL} demonstrate the performance of their algorithms in terms of sum-rate with Gaussian alphabet inputs. In this section, we present examples of performance of these algorithms with the practical case of finite constellation inputs.

Consider a MIMO GIC with $K=3$, $n_{t_i}=n_{r_i}=2$, $d_i=1$ for all $i$. We shall consider the max-SINR (signal to interference plus noise ratio) algorithm from \cite{GCJ} and the maximum sum chordal distance algorithm from Section IV A of \cite{SPLL} for comparison with the proposed algorithms. The closed form IA precoder solution of \cite{CaJ1} for the considered MIMO GIC is given by 
\begin{align*}
 V_1=e_1, ~V_2=H^{-1}_{23} H_{13}V_1, ~V_3=H^{-1}_{32}H_{12}V_1
\end{align*}where, $e_1$ is an eigen vector of the matrix $H^{-1}_{13}H_{23}H^{-1}_{21}H_{31}H^{-1}_{32}H_{12}$. The maximum sum chordal distance algorithm selects the eigen vector $e_1$ which maximizes the sum chordal distance \cite{SPLL}. We however select the eigen vector $e_1$ which maximizes the sum-rate $\sum_{i=1}^{3}I[X_i;Y_i]$ with finite constellation inputs. Clearly, the maximum sum chordal distance algorithm cannot perform better in terms of sum-rate when $e_1$ is chosen to maximize the sum-rate with finite constellation inputs. We call this as the max-sum-rate CaJ IA solution. The max-SINR algorithm aims to maximize the signal to interference plus noise ratio at each of the receivers so that the sum-rate with $I[X_i;Y_i]$ with Gaussian input alphabets is maximized. Unlike in \cite{GCJ} or \cite{SPLL}, we do not use any receive filter matrices as using them can only reduce the rate $I[X_i;Y_i]$ because of data-processing inequality \cite{CoT}. The max-SINR algorithm computes receive filter 
matrices at every iteration\footnote{For the sake of brevity, we do not present the details of the algorithm and the reader can have the details from \cite{GCJ}.}. We discard the receive filter matrices once the max-SINR algorithm converges and then compute $\sum_{i=1}^{3}I[X_i;Y_i]$ with finite constellation inputs. Similarly, we compute $\sum_{i=1}^{3}I[X_i;Y_i]$ with finite constellation inputs for the max-sum-rate CaJ IA solution without the use of any zero forcing filters at the receivers. 

\begin{enumerate}
\item {\em Performance of the proposed algorithm averaged over channel realizations:} The ergodic sum-rates, i.e., the sum-rate averaged with the entries of the channel matrices being taken from ${\cal CN}(0,1)$, using BPSK and QPSK input constellations are simulated. The ergodic sum-rates obtained using the max-SINR algorithm (with random initialization), max-sum-rate CaJ IA solution, and the proposed gradient ascent algorithms with $f$, $f_1$, and $f_2$ as the objective functions using the max-SINR precoders as the initialization are shown for the BPSK constellation and the QPSK constellation in Fig. \ref{fig-sum_rate_comparison}$(a)$ and in Fig. \ref{fig-sum_rate_comparison}$(b)$ respectively. The parameter $r$ chosen for the objective function $f_2$ is equal to $2$. The chosen parameters in the gradient ascent algorithm are given by $\text{\em max\_iterations}=10, \beta=0.2, \alpha=0.005, \epsilon=0.001$. As seen from Fig. \ref{fig-sum_rate_comparison}$(a)$ and in Fig. \ref{fig-sum_rate_comparison}$(b)$, 
the precoders obtained by optimization of $f_1$ gives negligible improvement in the ergodic sum-rate over that obtained from the max-SINR algorithm. However the precoders obtained by optimization of $f$ and $f_2$ give considerable improvement in the ergodic sum-rate over that obtained from the max-SINR algorithm. Moreover, there is negligible difference in the sum-rate obtained by optimizing $f$ and $f_2$. Hence, gradient ascent algorithm with $f_2$ as the objective function (with $r=2$) is a worthy low-complexity alternative to the gradient ascent algorithm with $f$ as the objective function. Also, observe that the max-sum-rate CaJ IA solution performs badly compared to the other algorithms. For clarity on the gains in the ergodic sum-rate obtained by optimizing $f$ and $f_2$ over that obtained from the max-SINR algorithm, the ergodic sum-rate values are given in Table \ref{tab1} and Table \ref{tab2} for BPSK and QPSK constellations respectively.
\begin{figure*}[htbp]
\centering
\subfigure[BPSK constellation ] {\label{subfig:1}\includegraphics[totalheight=3.0in,width=5.6in]{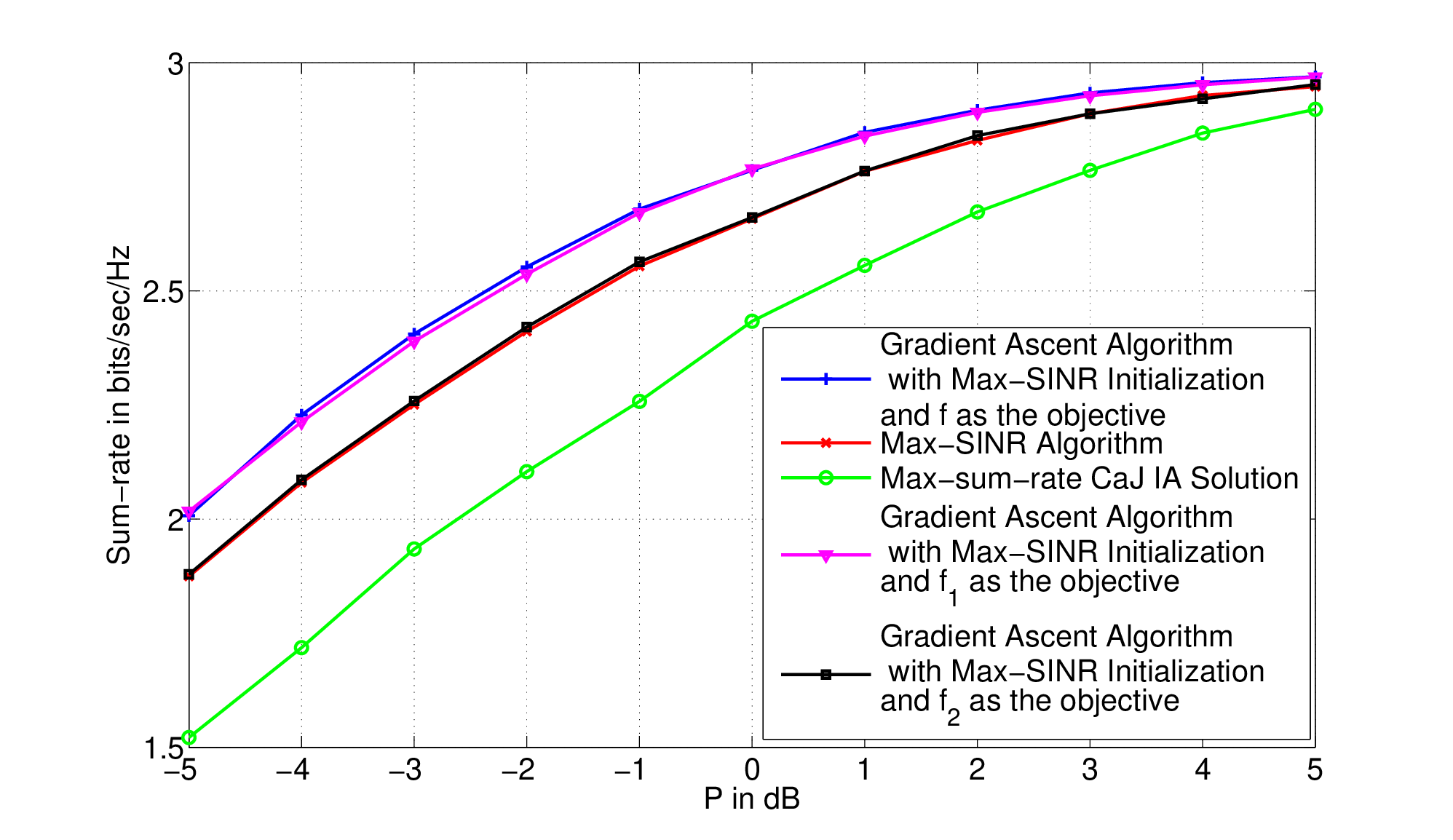}}
\subfigure[QPSK constellation ] {\label{subfig:2}\includegraphics[totalheight=3.0in,width=5.6in]{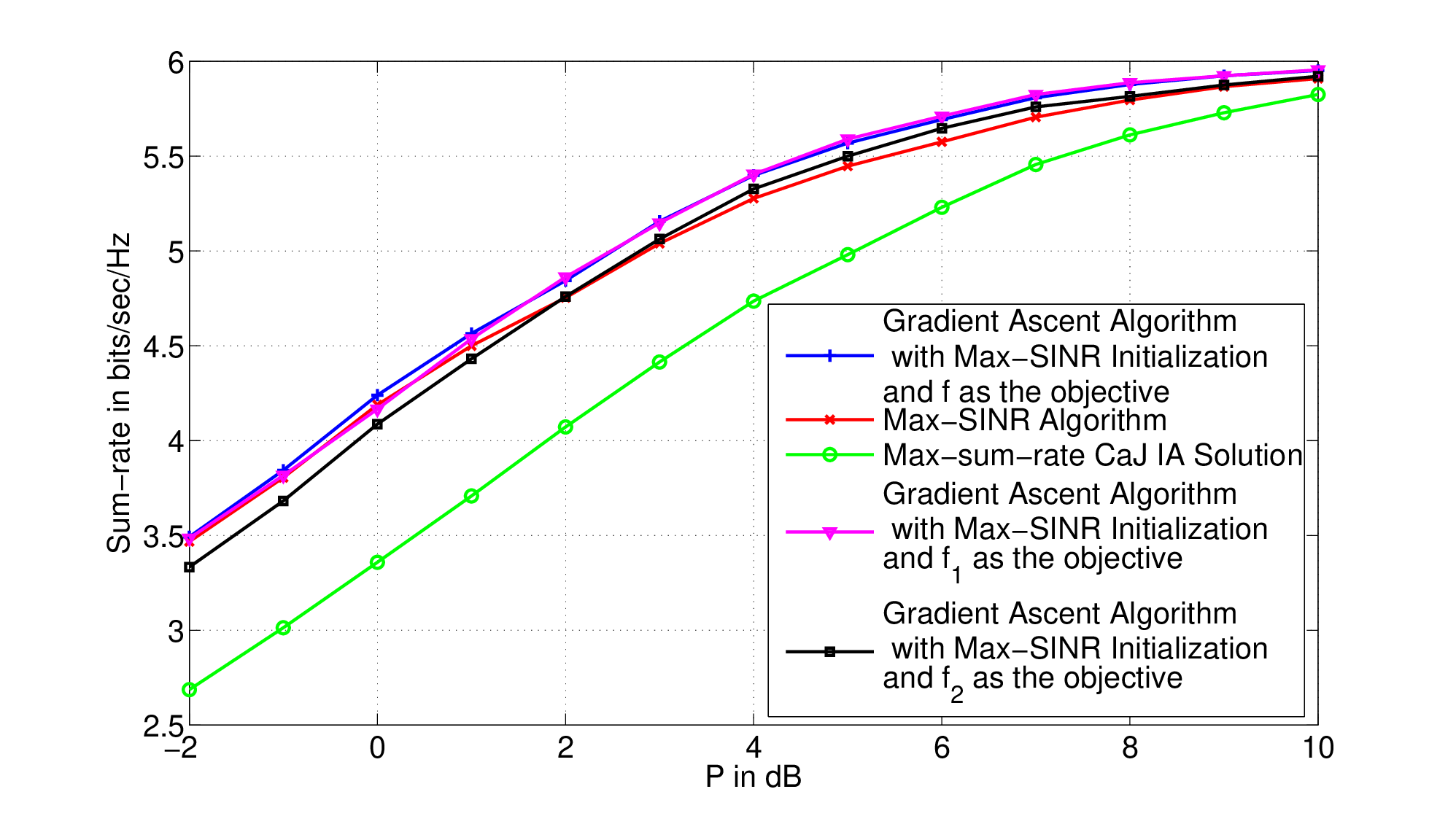}}
\caption{Sum-rate (in bits/sec/Hz) vs P (in dB).}
 \label{fig-sum_rate_comparison}
 \hrule
\end{figure*}
\begin{table*} 
\centering
\normalsize
\caption{Ergodic sum-rate values (in bits/sec/Hz) for BPSK constellation inputs}
\begin{tabular}{|c|c|c|c|c|c|c|c|c|c|c|c|} 
\hline
\small P (in dB) & $-5$ & $-4$ & $-3$ & $-2$ & $-1$ & $0$ & $1$ & $2$ & $3$ & $4$ & $5$\\
\hline
\small Max-SINR algorithm & $1.874$ & $2.079$ & $2.251$ & $2.411$ & $2.554$ &  $2.658$ & $2.762$ & $2.829$ & $2.888$ & $2.928$ & $2.947$\\
\hline
\small Gradient ascent algorithm  &  &  &  &  &  &  &  &  &  &  &  \\
\small with $f$ as the objective & $2.007$ & $2.228$ & $2.405$ & $2.552$ & $2.679$ & $2.764$ & $2.847$ & $2.895$ & $2.933$ & $2.956$ &  $2.969$\\
\hline
\small Gradient ascent algorithm &  &  &  &  &  &  &  &  &  &  & \\
\small with $f_2$ as the objective & $2.027$ & $2.206$ & $2.394$ & $2.564$ & $2.689$ & $2.767$ & $2.846$ & $2.895$ & $2.931$ & $2.952$ & $2.970$\\
\hline
\small Max-sum-rate  &  &  &  &  &  &  &  &  &  &  &  \\
 \small CaJ IA Solution & $1.521$ & $1.718$ & $1.934$ & $2.104$ & $2.258$ & $2.433$ & $2.555$ & $2.673$ & $2.764$ & $2.845$ & $2.898$\\
\hline
\end{tabular}
~~\\
~~\\
~\\
~~\\
~\\
~~\\
~\\
~~\\
\label{tab1}
\end{table*}
\begin{table*} 
\centering
\normalsize
\caption{Ergodic sum-rate values (in bits/sec/Hz) for QPSK constellation inputs}
\begin{tabular}{|c|c|c|c|c|c|c|c|c|c|c|c|c|c|} 
\hline
\footnotesize P (in dB) & \footnotesize$-2$ & \footnotesize$-1$ & \footnotesize$0$ & \footnotesize$1$ & \footnotesize$2$ &\footnotesize $3$ & \footnotesize$4$ & \footnotesize$5$ & \footnotesize$6$ & \footnotesize$7$ & \footnotesize$8$ & \footnotesize$9$ & \footnotesize$10$\\
\hline
\footnotesize Max-SINR &  &  &  &  &  &  &  &  &  &  & & & \\
\footnotesize algorithm & \footnotesize$3.463$ & \footnotesize$3.784$ & \footnotesize$4.117$ & \footnotesize$4.471$ & \footnotesize$4.771$ & \footnotesize$5.032$ & \footnotesize$5.273$ & \footnotesize$5.463$ & \footnotesize$5.589$ & \footnotesize$5.732$ & \footnotesize$5.810$ & \footnotesize$5.866$ & \footnotesize$5.919$\\
\hline
\footnotesize Gradient ascent &  &  &  &  &  &  &  &  &  &  & & & \\
\footnotesize algorithm  &  &  &  &  &  &  &  &  &  &  & & & \\
\footnotesize with $f$ as the objective & \footnotesize$3.492$ & \footnotesize$3.843$ & \footnotesize$4.239$ & \footnotesize$4.564$ & \footnotesize$4.843$ & \footnotesize$5.155$ & \footnotesize$5.398$ & \footnotesize$5.568$ & \footnotesize$5.692$ & \footnotesize$5.808$ & \footnotesize$5.877$ & \footnotesize$5.923$ & \footnotesize$5.951$\\
\hline
\footnotesize Gradient ascent &  &  &  &  &  &  &  &  &  &  & & & \\
\footnotesize algorithm &  &  &  &  &  &  &  &  &  &  & & &\\
\footnotesize with $f_2$ as the objective &  \footnotesize$3.485$ & \footnotesize$3.816$ & \footnotesize$4.166$ & \footnotesize$4.536$ & \footnotesize$4.862$ & \footnotesize$5.147$ & \footnotesize$5.404$ & \footnotesize$5.589$ & \footnotesize$5.710$ & \footnotesize$5.824$ & \footnotesize$5.886$ & \footnotesize$5.923$ & \footnotesize$5.954$\\
\hline
\footnotesize Max-sum-rate  &  &  &  &  &  &  &  &  &  &  &  & &\\
 \footnotesize CaJ IA Solution &  \footnotesize$2.686$ & \footnotesize$3.012$ & \footnotesize$3.358$ & \footnotesize$3.708$ & \footnotesize$4.071$ & \footnotesize$4.414$ & \footnotesize$4.735$ & \footnotesize$4.980$ & \footnotesize$5.229$ & \footnotesize$5.455$ & \footnotesize$5.611$ & \footnotesize$5.728$ & \footnotesize$5.824$\\
\hline
\end{tabular}
~~\\
~~\\
~\\
\label{tab2}
\end{table*}
As observed from Table \ref{tab1}, the ergodic sum-rate gain using the proposed gradient ascent algorithm with $f$ and $f_2$ as objectives over the max-SINR algorithm for BPSK input constellations is more than $0.1$ bits/sec/Hz upto $P=0 ~dB$. As expected the gain decreases as $P$ becomes higher as the sum-rate obtained using all the algorithms saturate to $3$ bits/sec/Hz for every channel realizations. Similarly from Table \ref{tab2}, from $P=3~dB$ to $P=6~dB$, the ergodic sum-rate gain using the proposed gradient ascent algorithm with $f$ and $f_2$ as objective functions over the max-SINR algorithm for QPSK  input constellations is more than $0.1$ bits/sec/Hz. In this case also, the gain decreases as $P$ becomes higher. 

\item {\em Convergence and performance of the proposed algorithm for a fixed channel:} 
The parameters used in Algorithm $1$ are $\text{\em max\_iterations}=15, \beta=0.2, \alpha=0.005, \epsilon=0.001$. The chosen channel matrix is given in (\ref{ch_value}). The convergence behaviour of Algorithm $1$ using $f$ and $f_2$ (with $r=2$) as objective functions with BPSK inputs is shown in Fig. \ref{fig-conv} for $P=-5~dB$, $P=-2~dB$, and $P=0~dB$, with precoders obtained from the max-SINR algorithm as initialization.
The initial precoders for the max-SINR algorithm are chosen randomly. The proposed algorithm with $f_2$ as the objective function terminates well before the $\text{\em max\_iterations}$ number for all $P$ because the condition in Line $5$ of Algorithm $1$ is satisfied. As seen from Fig. \ref{fig-conv}, in all the cases the sum-rates obtained at the termination of the proposed algorithm with $f_2$ as the objective function are almost the same as that obtained with $f$ as the objective function. Furthermore, for all $P$, the sum-rate gains obtained over the max-SINR algorithm is more than $0.1$ bits/sec/Hz. 
\begin{figure*}
\scriptsize
\begin{align}
\label{ch_value}
&H= \begin{bmatrix}
	     0.3109 - 0.3888i &  0.3610 - 0.1670i & -0.2818 - 0.4540i &  0.4015 + 0.1563i & -0.8145 + 0.3811i &  0.3374 - 0.9180i\\
             0.3560 - 0.5511i &  1.1616 - 0.6200i &  0.6564 - 1.0746i &  0.0047 - 0.2788i & -0.2604 - 0.2678i & -0.4395 + 0.3621i\\
            -0.4314 + 0.3680i & -0.4350 - 0.5917i & -0.5202 - 1.2342i & -0.4241 + 0.5924i &  1.0494 - 0.6468i & -1.3259 + 0.0483i\\
            -0.2503 - 0.7360i & -0.4445 - 0.4758i & -0.6053 - 2.2125i &  0.8310 + 0.2683i &  0.2765 - 1.2192i &  0.2176 + 0.4875i\\
            -0.3055 + 0.4185i & -0.1248 - 0.6503i &  1.2821 + 0.3859i &  0.7999 + 1.0462i &  0.9247 - 0.9696i &  0.0276 - 0.1582i\\
             0.1163 - 0.2062i &  0.8211 - 0.4995i &  0.6084 - 0.6892i & -1.2459 + 0.1684i & -0.4081 + 1.2450i & -0.5386 - 0.1936i
    \end{bmatrix}.
\end{align}
\hrule
\vspace{-0.5cm}
\end{figure*}
\begin{figure*}[htbp]
\centering
\subfigure[$P=-5~dB$] {\label{subfig:1}\includegraphics[totalheight=2.5in,width=4.6in]{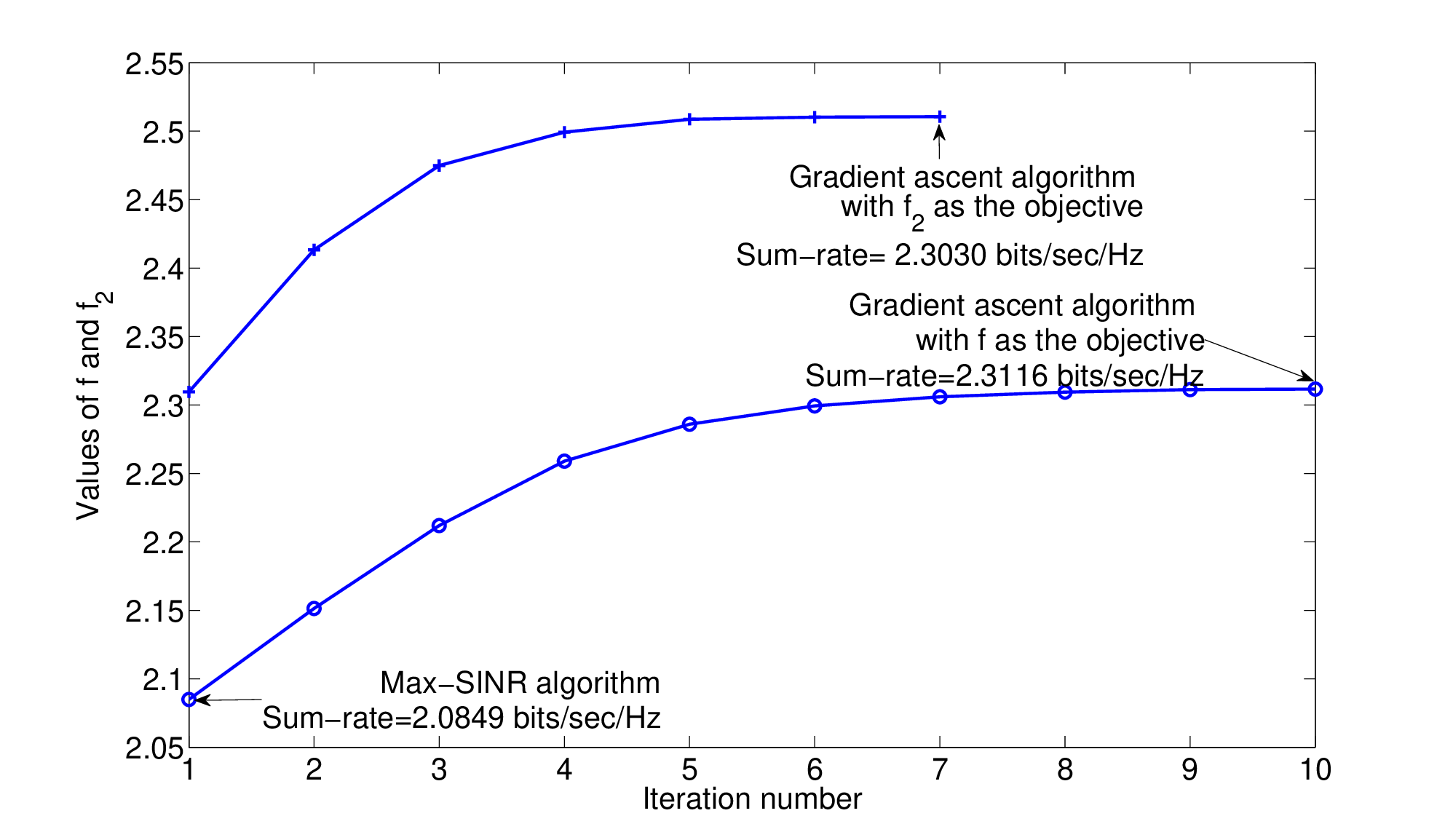}}\\
\subfigure[$P=-2~dB$] {\label{subfig:2}\includegraphics[totalheight=2.5in,width=4.6in]{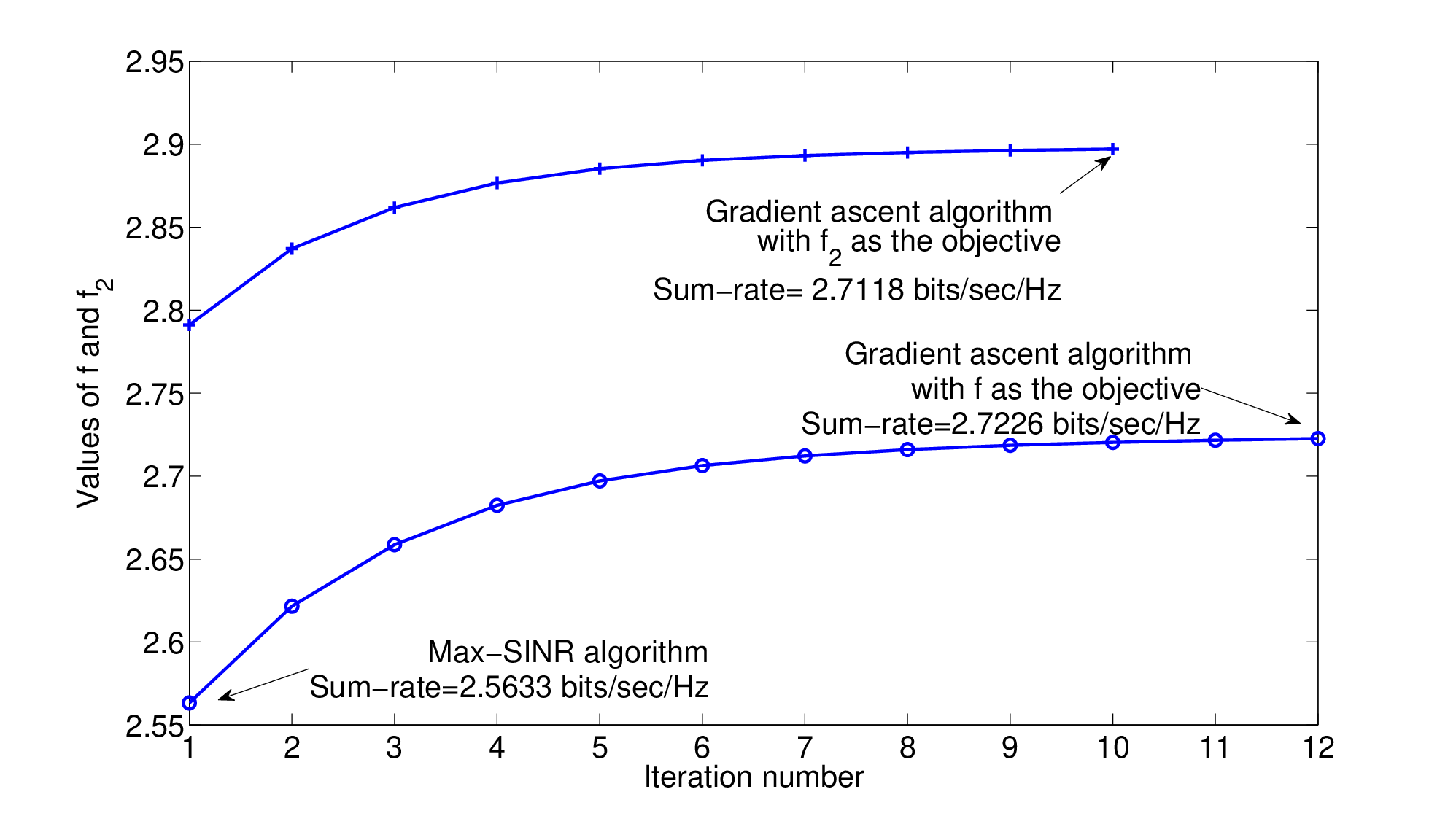}}\\
\subfigure[$P=0~dB$]  {\label{subfig:3}\includegraphics[totalheight=2.5in,width=4.6in]{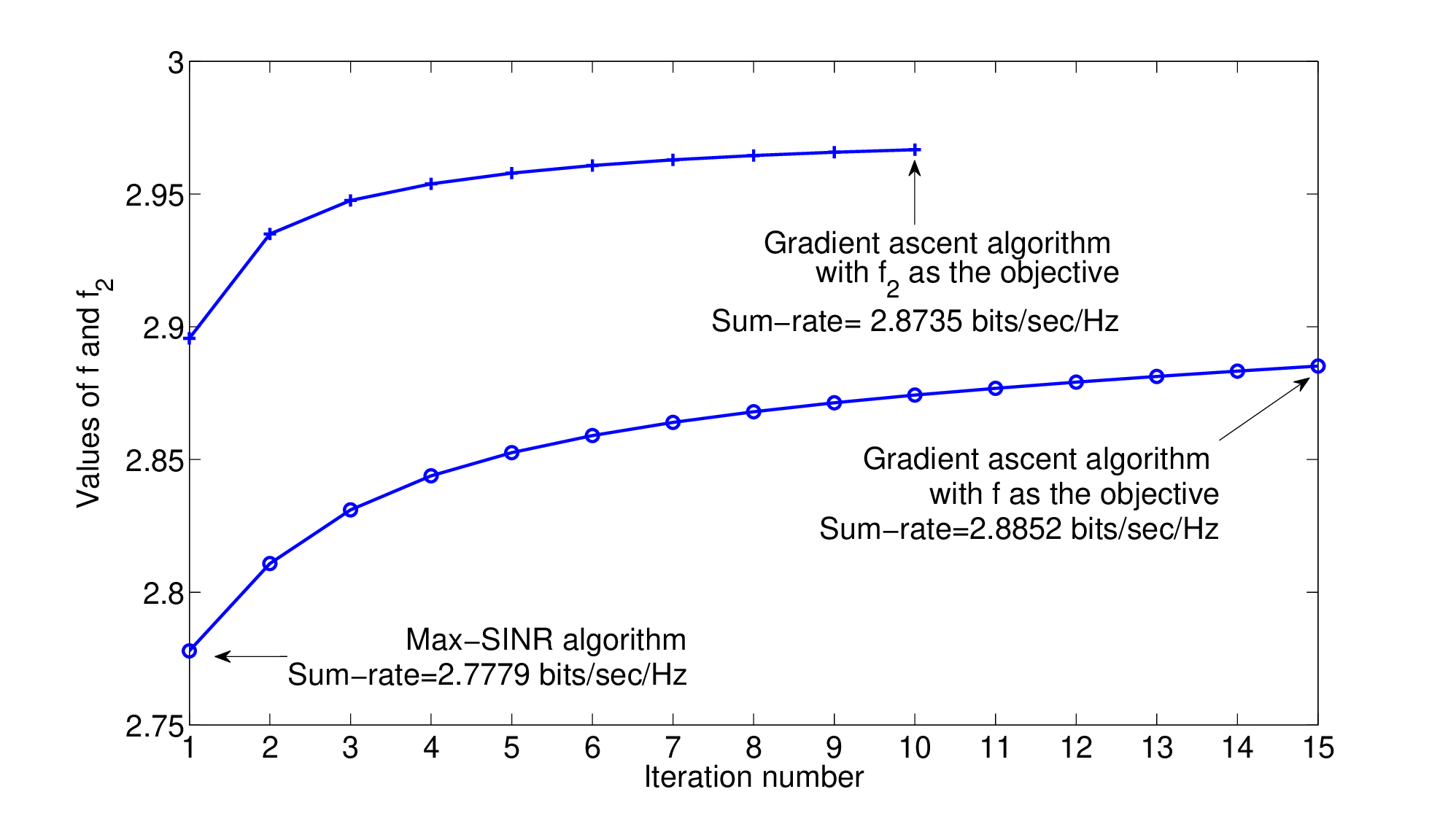}}
\caption{Increase in the objective functions $f$ and $f_2$ (with $r=2$) with every iteration for the constant channel case with BPSK inputs.}
\label{fig-conv}
\vspace{-0.5cm}
\end{figure*}
\end{enumerate}


\section{Conclusion} \label{sec5}
The paper discussed linear precoding for $K$-user MIMO GIC with finite constellation inputs. We showed that, for constant MIMO GIC with finite constellation inputs, CCSC for every transmitter can be achieved just by using a naive scheme of treating the interference as noise at every receiver, at high SNR. This result is in contrast with the Gaussian alphabet case where, at high SNR, the scheme that treats interference as noise saturates to a value determined by the channel gains for the SISO case. A set of necessary and sufficient conditions for CCSC optimal precoders were derived. It was observed that IA precoders fall under the class of CCSC optimal precoders. However, CCSC optimal precoders are easy to obtain for any given value of channel gains unlike obtaining IA precoders. Note that IA precoders have feasibility constraints which restrict the value of $d_i$, for all $i$. An important contribution of this paper is pointing out the CCSC optimality of IA precoders when the values of $d_i$ satisfy the 
feasibility constraints in \cite{GSB}. 

Finally, gradient ascent based algorithms with $f$ and $f_2$ as the objective functions were proposed. It was shown through simulations that optimizing the high SNR approximation for $f$, i.e., $f_2$ with $r=2$ performed as good as optimizing $f$ in terms of sum-rate. Thus, optimizing $f_2$ using the proposed gradient ascent based algorithm is a worthy low complexity algorithm. It was also observed that, at high SNR, optimizing $f$ or $f_2$ is equivalent to maximizing the minimum Euclidean distance for ML decoding across all the receivers.

\end{document}